\newcommand{\hn}{\ensuremath{\mathcal{H}}}
\renewcommand{\epsilon}{\varepsilon}
\renewcommand{\tilde}{\widetilde}
\newcommand{\hide}[1]{\relax}
\newtheorem{fact}{Fact}
\title{Fault-Tolerant Approximate \\Shortest-Path Trees\thanks{A preliminary version of this paper appeared on the \emph{Proceedings of the 22nd European Symposium on Algorithms (ESA’14)}, September 8-10, 2014, Wroclaw, Poland, Vol. 8737 of Lecture Notes in Computer Science, Springer, pp. 137--148.
This work was partially supported by the Research Grant PRIN 2010 ``ARS TechnoMedia'', funded by the Italian Ministry of Education, University, and Research.}}
\author{Davide~Bil\`o
\and
Luciano~Gual\`a
\and
Stefano~Leucci
\and
Guido~Proietti}
\institute{Davide Bil\`o \at Dipartimento di Scienze Umanistiche e Sociali, Università di Sassari, Italy. \\ \email{davide.bilo@uniss.it} \and
Luciano Gual\`a \at Dipartimento di Ingegneria dell'Impresa, Università di Roma ``Tor Vergata'', Italy. \\ \email{guala@mat.uniroma2.it} \and
Stefano Leucci \at Dipartimento di Informatica, ``Sapienza'' Università di Roma, Italy. \\ \email{leucci@di.uniroma1.it} \and
Guido Proietti \at Dipartimento di Ingegneria e Scienze dell'Informazione e Matematica, Università degli Studi dell'Aquila, Italy and Istituto di Analisi dei Sistemi ed Informatica, CNR, Roma, Italy. \\ \email{guido.proietti@univaq.it}}
\date{}
\begin{document}

\maketitle

\begin{abstract}
The resiliency of a network is its ability to remain \emph{effectively} functioning also when any of its nodes or links fails.
However, to reduce operational and set-up costs, a network should be small in size, and this conflicts with the requirement of being resilient.
In this paper we address this trade-off for the prominent case of the {\em broadcasting} routing scheme, and we build efficient (i.e., sparse and fast) \emph{fault-tolerant approximate shortest-path trees}, for both the edge and vertex \emph{single-failure} case. In particular, for an $n$-vertex non-negatively weighted graph, and for any constant $\varepsilon >0$, we design two structures of size $O(\frac{n \log n}{\varepsilon^2})$ which guarantee $(1+\varepsilon)$-stretched paths from the selected source also in the presence of an edge/vertex failure. This favorably compares with the currently best known solutions, which are for the edge-failure case of size $O(n)$ and stretch factor 3, and for the vertex-failure case of size $O(n \log n)$ and stretch factor $3$. Moreover, we also focus on the unweighted case, and we prove that an ordinary \emph{spanner} can be slightly augmented in order to build efficient fault-tolerant approximate \emph{breadth-first-search trees}.
\end{abstract}

\keywords{Shortest-path Trees, Fault-tolerant Structures, Approximate Distances}

\section{Introduction}
Broadcasting a message from a source node to every other node of a network is one of the most basic communication primitives. Since this operation should be performed by making use of a both sparse and fast infrastructure, the natural solution is to root at the source node a {\em shortest-path tree} (SPT) of the underlying graph. However, the SPT, as any tree-based network topology, is highly sensitive to a link/node malfunctioning, which will unavoidably cause the disconnection of a subset of nodes from the source.

To be readily prepared to react to any possible (transient) failure in a SPT, one has then to enrich the tree by adding to it a set of edges selected from the underlying graph, in order to obtain a subgraph that approximately preserves the distance from the source vertex even when a single component (i.e., edge or vertex) fails.
More formally, if $s$ denotes a distinguished source vertex of an undirected graph $G=(V(G),E(G))$ with non-negative real weights on its edges, we say that a spanning subgraph $H$ of $G$ is an \emph{Edge-fault-tolerant $\alpha$-Approximate SPT}  (in short, $\alpha$-{\ttfamily EASPT}), with $\alpha>1$, if it satisfies the following condition: For each edge $e \in E(G)$, all the distances from $s$ in the subgraph $H-e=(V(H), E(H) \setminus \{e\})$ are at most $\alpha$ times longer than the corresponding distances in $G-e$.
When \emph{vertex failures} are considered, then the {\ttfamily EASPT} is correspondingly called {\ttfamily VASPT}.
Ideally we would like a {\ttfamily E/VASPT} to have both a low \emph{stretch} $\alpha$ and a small \emph{size}, measured as the number $|E(H)|$ of edges in $H$.
The case in which $\alpha=1$ corresponds to requiring all the post-failure distances in $H-e$ to match the distances in $G-e$, i.e., $H$ must contain a SPT (from $s$) of $G - e$ for every $e \in E(G)$. However, in this case, it is easy to see that $\Omega(n^2)$ edges might be required, as shown in Figure~\ref{fig:n2_edges_example_weighted}.


The aim of this paper is to show that, as soon as we allow for approximate distances, we can obtain an almost optimal stretch-size tradeoff for {\ttfamily E/VASPT}s.

\subsection{Related work}

A problem that is very closely related to the design of a {\ttfamily E/VASPT} is that of computing a \emph{single-source distance sensitivity oracle} ({\ttfamily SDSO}). Designing an efficient {\ttfamily SDSO} means to compute, with a \emph{low} preprocessing time, a \emph{compact} data structure which is able to \emph{quickly} return a (possibly \emph{approximate}) distance between a source vertex $s$ and any other vertex of the graph, following a component failure.
Notice that any {\ttfamily E/VASPT} $H$ also implies the existence of a (trivial) {\ttfamily SDSO} having the same size, the same stretch, and a query time of $O(|E(H)| + n \log n)$: this {\ttfamily SDSO} is obtained by storing the whole graph $H$ and by running Dijkstra's algorithm from $s$ on the surviving graph to answer queries.

\begin{figure}
	\centering
	\includegraphics[scale=1.05]{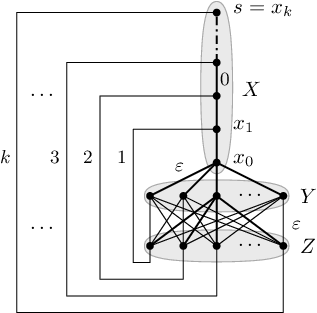}
	\caption{A weighted graph $G$ on $n$ vertices and $\Theta(n^2)$ edges such that an edge-fault-tolerant SPT $H$ of $G$ is $G$ itself. The vertices are partitioned into three sets $X=\{x_0, \dots, x_k = s\}$, $Y=\{y_1, \dots, x_k\}$, and $Z=\{z_1, \dots, z_k\}$. Vertices in $X$ are connected by a path whose edges have weight 0, there is a star centered in $x_0$ whose leaves are the vertices in $Y$, and the sets $Y$ and $Z$ induce a complete bipartite graph. All edge weights of the star and the complete bipartite graph are equal to $\varepsilon$ with $0 < \varepsilon < \frac{1}{2}$, while, for $0<i \le k$, there is an edge of cost $i$ between $x_i$ and $z_i$. A SPT of $G$ from $s$ is shown with bold edges. If any edge $e=(x_i, x_{i+1})$ fails the new SPT of $G-e$ must include all the edges connecting $z_i$ to the vertices in $Y$.}
	\label{fig:n2_edges_example_weighted}
\end{figure}

In \cite{BK13} the authors compute in $O(m \log n + n^2 \log n)$ time a {\ttfamily SDSO} of size $O(n \log n)$, which reports, in constant time per query, $3$-stretched distances following the failure of a single vertex.
Such an oracle is also \emph{path-reporting}, i.e., it is able to return the path associated with a distance query by paying an additional time which is proportional to the number of edges it contains.
A closer inspection of this result shows that this {\ttfamily SDSO} is actually obtained through the computation of a $3$-{\ttfamily VASPT} of size $O(n \log n)$.
Regarding single edge failures, in \cite{NPW03} are (implicitly) provided (i) a path-reporting {\ttfamily SDSO} having stretch $3$, size $O(n)$, and constant query time, and (ii) a corresponding $3$-{\ttfamily EABFS}\footnote{We use the notation {\ttfamily E/VABFS} instead of {\ttfamily E/VASPT} to stress the fact that we are dealing with unweighted graphs.} containing $2(n-1)$ edges.
Very recently, in \cite{BGLP16ESA}, the authors show how to build a (non path-reporting) {\ttfamily SDSO} having stretch $1+\epsilon$, size $O(n \delta)$ and query time $O(\delta \log n)$, where $\delta = \epsilon^{-1} \log \epsilon^{-1}$, which can be improved to $O(1)$ for the special case $\epsilon=1$.

If we focus on unweighted graphs and we insist on preserving exact distances (i.e., stretch equal to $1$) then, in \cite{PP13}, the authors provide a $1$-{\ttfamily E/VABFS} of size $O(n \cdot \min\{\mathit{ecc}(s),\sqrt{n}\})$, where $\mathit{ecc}(s)$ denotes the eccentricity of the source $s$ in $G$. In the same paper, the authors also exhibit a corresponding lower bound of $\Omega(n^{3/2})$ for the size of such a structure (in fact, the construction provided in Figure~\ref{fig:n2_edges_example_weighted} is obtained by elaborating such lower bound).
In \cite{BK13} the authors focus on the vertex-failure case and, for any $\varepsilon>0$, they compute in $O(m \sqrt{n/\varepsilon})$ time a path-reporting {\ttfamily SDSO} of size $O(\frac{n}{\varepsilon^3}+n \log n)$, stretch $(1+\varepsilon)$, and having constant query time.
Once again, this {\ttfamily SDSO} is obtained through the construction of a $(1+\varepsilon)$-{\ttfamily VABFS} of size $O(\frac{n}{\varepsilon^3}+n \log n)$.
Actually, we point out that the latter structure can be easily sparsified so as to obtain, for any $\varepsilon>0$, a $(1+\varepsilon)$-{\ttfamily EABFS} of size $O(\frac{n}{\varepsilon^3})$: indeed, its $O(n \log n)$-size term is associated with an auxiliary substructure that, for the case of edge failures, can be made of linear size. This result is of independent interest, since it qualifies itself as the best current solution for the {\ttfamily EABFS} problem.
In \cite{PP14} the authors present, among other results, a $3$-{\ttfamily EABFS} having at most $4n$ edges. Interestingly, this was the first \emph{explicit} construction for the problem, but two (better) implicit solutions were already available in the literature: the first one is the just mentioned structure which can be derived from the results presented in \cite{BK13}, while the second one is the $3$-{\ttfamily EASPT} of size at most $2n$ (and then, \emph{a fortiori}, a $3$-{\ttfamily EABFS} of the same size) of \cite{NPW03} that can be easily obtained as a by-product of the results given therein (we will discuss this point in more detail later).

\subsection{Our results} Our main result is a polynomial time construction\footnote{We do not insist on the time efficiency in building our structures, since the focus of our paper, consistently with the literature, is on the trade-off between their size and their stretch factor.} of a $(1+\varepsilon)$-{\ttfamily E/VASPT} of size $O(\frac{n \log n}{\varepsilon^2})$, for any $\varepsilon >0$. These two structures substantially improve the stretch of the $3$-{\ttfamily EASPT} of linear size implicitly given in \cite{NPW03}, and that of the $3$-{\ttfamily VASPT} of size $O(n \log n)$ given in \cite{BK13}, respectively, while essentially using the same number of edges (up to a logarithmic factor in the former case). To obtain our results, we perform
a careful selection of edges that will be added to an initial SPT.
The somewhat surprising outcome of our approach is that if we accept to have slightly stretched fault-tolerant paths, then we can drastically reduce the $\Theta(n^2)$ size of the structure that we would have to pay for having fault-tolerant \emph{shortest} paths!
Actually, the analysis of the stretch factor and of the structures' size induced by our algorithms is quite involved. Thus, for clarity of presentation, we give our result in two steps: first, we show an approach to build a $(1+\varepsilon)$-{\ttfamily EASPT} of size $O(\frac{n \log n}{\varepsilon^2})$, then we outline how this approach can be extended to the vertex-failure case.

We also focus on the unweighted case, and we exhibit an interesting connection between a fault-tolerant approximate BFS and an \emph{$(\alpha,\beta)$-spanner}. An $(\alpha,\beta)$-spanner of a graph $G$ is a spanning subgraph $H$ of $G$ such that \emph{all} the node-to-node distances in $H$ are stretched by at most a multiplicative factor of $\alpha$ plus an additive term of $\beta$ w.r.t.\ the corresponding distances in $G$.
If such a condition holds even after an edge/vertex is deleted from both $G$ and $H$, then $H$ is an \emph{edge/vertex-fault-tolerant $(\alpha,\beta)$-spanner}.
Moreover, if the guarantee on the stretch only holds for distances from vertices in a subset $S \subseteq V(G)$, then the spanner is said to be \emph{sourcewise}.
We show how a $(\alpha,\beta)$-spanner of size $\sigma=\sigma(n,m)$ can be used to build in polynomial time a sourcewise edge-fault-tolerant (resp. vertex-fault-tolerant) $(\alpha,\beta)$-spanner of size $O(\sigma + |S| \cdot n)$ (resp., $O(\sigma + |S| \cdot n\log n)$).
This result has three main consequences.
First of all notice that when $|S|=1$, a sourcewise edge/vertex-fault-tolerant $(\alpha,0)$-spanner is exactly an $\alpha$-{\ttfamily E/VABFS}. As a consequence, 
for relevant values of $\alpha$ and $\beta$ (e.g., when they are constant) the {\ttfamily E/VABFS} problem is easier than the corresponding (non fault-tolerant) spanner problem, and we regard this as an interesting hardness characterization.\footnote{For constant values of $\alpha$ and $\beta$, the size of an $(\alpha,\beta)$-spanner is $\omega(n \log n)$ and hence the additive terms in the size of our {\ttfamily E/VABFS} are dominated by $\sigma$.}
A second consequence, is that this bridge between the two problems allows to build the sparsest $(1,\beta)$-{\ttfamily VABFS} structures known so far, by making use of the vast literature on additive $(1,\beta)$-spanners. More precisely, the $(1,4)$-spanner of size $\widetilde{O}(n^\frac{7}{5})$ given in \cite{Che13}, and the $(1,6)$-spanner of size $O(n^\frac{4}{3})$ given in \cite{BKMP10}, can be used to build corresponding {\ttfamily VABFS} structures.
As a last consequence of our result, we are able to: (i) sparsify, for $|S| = \widetilde{\omega}(n^\frac{1}{15})$, the sourcewise edge-fault-tolerant $(1,4)$-spanner of size $O(|S| \cdot n^{\frac{4}{3}})$ given in \cite{PP14} by reducing its size to $\widetilde{O}(n^\frac{7}{5} + |S| n)$; and (ii) reduce the stretch of the sourcewise vertex-fault-tolerant $(1,8)$-spanner of size $\widetilde{O}(n^\frac{4}{3})$ given in \cite{P14} to $(1,6)$, for $|S|=\widetilde{O}(n^\frac{1}{3})$ (see Section~\ref{sct:unweighted} for the exact bounds of the obtained spanners).


\subsection{Other related results}

\paragraph{Additive \texttt{EABFS} structures.}
In addition to the already cited results, in \cite{PP14} the authors also consider $(\alpha,\beta)$-{\ttfamily EABFS}, i.e., edge-fault-tolerant structures for which the length of a path is stretched by at most a factor of $\alpha$ plus an additive term of $\beta$.
In particular, they prove that $(1,3)$-{\ttfamily EABFS} structures admit a lower bound of
$\Omega(n^{5/4})$ edges, thus showing an interesting dichotomy between multiplicative and additive
stretches, i.e., the fact that additive stretches require super-linear size. Moreover, they construct a $(1,4)$-{\ttfamily EABFS} of size $O(n^{4/3})$.

\paragraph{Sourcewise \texttt{E/VABFS} structures.}
In \cite{PP13}, the same authors extend the already cited $1$-{\ttfamily E/VABFS} of size $O(n^\frac{3}{2})$ to the
sourcewise case, i.e., that in which the structure incorporates an edge-fault-tolerant BFS rooted at each vertex of a set $S \subseteq V(G)$. Here, they show the existence of a solution of size $O(\sqrt{|S|} \cdot n^{3/2})$, which is tight.
Moreover, they also consider the optimization problem of constructing a minimum-size sourcewise 1-\texttt{E/VABFS}, and they provide a corresponding tight $O(\log n)$-approximation algorithm.

\paragraph{Multiple edge failures.}
Regarding multiple edge failures, Parter in \cite{P15} presented a 2-edge-fault-tolerant \emph{exact} BFS having $O(n^{5/3})$ edges, which is tight, while in \cite{PP14} it is shown the existence of a $(3(f +1),(f +1) \log n)$-{\ttfamily EABFS} of size $O(fn)$ for any number $f=O(1)$ of failed edges. This latter result has been improved in \cite{BGLP16} where the authors prove the existence of a $(2|F| + 1)$-{\ttfamily EASPT} of size $O(fn)$ which tolerates the failure of any set $F$ of edges of size at most $f$. This structure can be converted into a corresponding {\ttfamily SDSO} having the same size, and with query time $O(|F|^2 \log^2 n)$.
Moreover, if one is willing to use $O(m \log^2 n)$ space, such an oracle is also able to handle any number of edge failures (i.e., up to $m$).
In \cite{DDFLP15}, the special case of \emph{shortest-path failures} was considered, where the set of failing edges $F$ is supposed to form a source-leaf subpath in a given SPT of $G$. In particular, for the case $|F|=2$, they give an {\ttfamily SDSO} achieving stretch $3$, size $O(n \log n)$, and constant query time.

\paragraph{Directed graphs.}

For single-source distances on directed graphs with integer positive edge weights bounded by $M$, in \cite{GW12} it is shown how to build efficiently in $\widetilde{O}(M n^{\omega})$ time, where $\omega< 2.373$ denotes the matrix
multiplication exponent, a randomized edge-fault-tolerant {\ttfamily SDSO} of size $\Theta(n^2)$ returning in $O(1)$ time distances from the source which are exact w.h.p.

\paragraph{Fault-tolerant spanners.}

Another setting which is very close in spirit to ours is that of \emph{fault-tolerant spanners}.
In \cite{CLPR09}, for weighted graphs and any integer $k \geq 1$, the authors present a $(2k-1,0)$-spanner resilient to $f$ vertex (resp., edge) failures of size $O(f^2 \, k^{f+1} \, n^{1+1/k} \, \log^{1-1/k} n)$ (resp., $O(f \, n^{1+1/k})$). This was later improved through a randomized construction in \cite{DK11}.
For a comparison, the sparsest known $(2k-1)$-multiplicative ordinary spanner has size $O(n^{1+1/k})$ \cite{ADDJS93}, and this is believed to be asymptotically tight due to the long-standing girth conjecture of Erd\H{o}s \cite{E64}.
Finally, we mention that in \cite{AFIR13} it was introduced the resembling concept of \emph{resilient spanners}, i.e., spanners such that whenever any edge in $G$ fails, then the relative distance increases in the spanner are very close to those in $G$, and it was shown how to build a resilient spanner by augmenting an ordinary spanner.

Concerning unweighted graphs, it makes instead sense to study fault-tolerant additive spanners. In particular, Braunshvig et al. \cite{BCP12} proposed the following general approach to build an additive spanner tolerating up to $f$ edge failures: Let $A$ be an $f$-edge-fault-tolerant $(\alpha,0)$-spanner, and let $B$ be an ordinary $(1,\beta)$-spanner. Then $H=A\cup B$ is an $f$--edge-fault-tolerant $(1,2f(2\beta+\alpha-1)+\beta)$-spanner. Recently, in \cite{BGGLP15} the corresponding analysis has been refined yielding a better additive bound of $2f(\beta + \alpha- 1) + \beta$, and, more in general, improved fault-tolerant additive spanners have been presented.
Also very close to our present work are the (non-fault-tolerant) \emph{sourcewise spanners} (which, again, approximately preserves all distances from a given set $S \subseteq V$ of sources). In that respect, in \cite{CGK13} the authors give, for any $k >1$, a structure with additive stretch  $2k$ and size $O(n^{1+1/(2k+1)}(k|S|)^{k/(2k+1)})$, which in particular for $k=\log n$ returns a structure with additive stretch $2 \log n$ and size $O(n \sqrt{|S| \log n})$. To the best of our knowledge, no results are instead known for the weighted case.

\paragraph{Further related works.}
For recent achievements on all-to-all distance sensitivity oracles, we refer the reader to \cite{BK13,BK09,CLPR10,DP09}, while for other results on single-edge/vertex failures spanners/oracles on unweighted graphs, we finally refer the reader to \cite{ARFI12,BK13,P14,P16}.

\subsection{Paper organization}
The paper is organized as follows: in Section~\ref{sct:notation} we introduce the notation that will be used throughout the paper; in Section~\ref{sct:3EASPT} we revisit one of the swap procedures presented in \cite{NPW03} to formally prove that it can be used to build a simple 3-{\ttfamily EASPT}; in Section~\ref{sct:EASPT} and \ref{sct:VASPT} we present our main results, namely a $(1+\varepsilon)$-{\ttfamily EASPT} and a $(1+\varepsilon)$-{\ttfamily VASPT}, respectively; in Section~\ref{sct:unweighted} we focus on unweighted graphs, and we show the connection between an {\ttfamily E/VABFS} and an $(\alpha,\beta)$-spanner; finally, in Section~\ref{sct:conclusions} we conclude the paper by outlining few directions for future research.

\section{Notation}
\label{sct:notation}
We start by introducing our notation. For the sake of brevity, we give it for the case of edge failures, but it can be naturally extended to the node failure case.

Given a non-negatively real weighted, undirected graph $G$, we will denote by $w_G(e)$ or $w_G(u,v)$ the weight of the edge $e=(u,v) \in E(G)$.
We also define $w(G) = \sum_{e \in E(G)} w(e)$.
Given an edge $e=(u,v)$, we denote by $G - e$ or $G - (u,v)$ (resp., $G + e$ or $G + (u,v)$) the graph obtained from $G$ by removing (resp., adding) the edge $e$.
Similarly, for a set $F$ of edges, $G-F$ (resp., $G + F$) will denote the graph obtained from $G$ by removing (resp., adding) the edges in $F$.

We will call $\pi_G(x,y)$ a shortest path between two vertices $x,y \in V(G)$, $d_G(x,y)$ its (weighted) length, and $T_G(s)$ a SPT of $G$ rooted at $s$. Whenever the graph $G$ and/or the vertex $s$ are clear from the context, we might omit them, i.e., we will  write $\pi(u)$ and $d(u)$ instead of $\pi_G(s,u)$ and $d_G(s,u)$, respectively. When considering an edge $(x,y)$ of an SPT we will assume $x$ and $y$ to be the closest and the furthest endpoints from $s$, respectively.

Given an edge $e \in E(G)$, we define $\pi_G^{-e}(x,y)$, $d_G^{-e}(x,y)$ and $T_G^{-e}(s)$ to be, respectively, a shortest path between $x$ and $y$, its length, and a SPT in the graph $G-e$. Moreover, if $P$ is a path from $x$ to $y$ and $Q$ is a path from $y$ to $z$, with $x,y,z \in V(G)$, we will denote by $P \circ Q$ the path from $x$ to $z$ obtained by concatenating $P$ and $Q$.

Given $G$, a vertex $s \in V(G)$, and an edge $e=(u,v) \in E(T_G(s))$,
we denote by $U_{G}(e)$ and $D_{G}(e)$ the partition of $V(G)$ induced by the two connected components of $T(G)-e$, such that $U_{G}(e)$ contains $s$ and $u$, and $D_{G}(e)$ contains $v$.
Then, $C_{G}(e) = \{ (x,y) \in E(G) \, : \, x \in U_{G}(e), y \in D_{G}(e) \}$ will denote the \emph{cutset} of $e$, i.e., the set of edges crossing the cut $(U_{G}(e), D_{G}(e))$.

For the sake of simplicity we consider only edge weights that are strictly positive. However our entire analysis also extends to non-negative weights. We also assume, w.l.o.g., that the input graph $G$ is 2-edge/vertex-connected, to avoid pathological failures that would disconnect the graph.
Throughout the rest of the paper we will assume that, when multiple shortest paths exist, ties will be broken in a consistent manner.
In particular we fix a SPT $T=T_G(s)$ of $G$ and, given a graph $H \subseteq G$ and $x,y \in V(H)$, whenever we compute the path $\pi_H(x,y)$ and ties arise, we will prefer the edges in $E(T)$.\footnote{The notation $H \subseteq G$ means that $H$ is a subgraph of $G$.}
We will also assume that if we are considering a shortest path $\pi_H(x,y)$ between $x$ and $y$ passing through vertices $x'$ and $y'$, then $\pi_H(x',y') \subseteq \pi_H(x,y)$.

\section{A 3-{\ttfamily EASPT} structure with at most $2n$ edges}
\label{sct:3EASPT}
We here provide a revisitation of one of the swap procedures presented in \cite{NPW03}  to formally prove that it can be used to build a simple 3-{\ttfamily EASPT}  with at most $2n$ edges, on which our construction of the $(1+\varepsilon)$-{\ttfamily EASPT} will rely.
More precisely, in \cite{NPW03} the authors were concerned with the problem of reconnecting in a best possible way (w.r.t.\ a set of distance criteria) the two subtrees of an SPT undergoing an edge failure, through a careful selection of a \emph{swap edge}, i.e., an edge with an endvertex in each of the two subtrees. In particular, they show that if we select as a swap edge for $e=(u,v)$ -- with $u$ closer to the source $s$ than $v$ -- the edge that lies on a shortest path in $G-e$ from $s$ to $v$,
then the distances from the source towards all the disconnected vertices is stretched at most by a factor of 3.\footnote{Actually, in \cite{NPW03} it is not explicitly claimed the 3-stretch factor, but this is implicitly obtained by the qualitative analysis of the swap procedure therein provided.}
Therefore, a $3$-{\ttfamily EASPT} of size at most $2n$ can be obtained by simply adding to a SPT rooted at $s$ such a swap edge for each corresponding tree edge, and interestingly this improves the $3$-{\ttfamily EASPT} of size at most $4n$ provided in \cite{PP14}.

\begin{algorithm}[t]
	\DontPrintSemicolon
	\SetKwInOut{Input}{Input}
	\SetKwInOut{Output}{Output}
	
	\Input{A non-negatively real weighted graph $G$, $s \in V(G)$}
	\Output{A 3-{\ttfamily EASPT} of $G$ rooted at $s$}

	\BlankLine	
		
	$H \gets T_G(s)$
	
	\For{$e=(u,v) \in E(T_G(s))$}
	{
		$f_e \gets$ the single edge in $C_{G}(e) \cap \pi_G^{-e}(v)$ \;
		$H \gets H + f_e$ \;
	}
	\Return $H$
	
	\caption{Algorithm for building a 3-{\ttfamily EASPT}}\label{alg:3ftspt}
\end{algorithm}

More formally, Algorithm~\ref{alg:3ftspt} builds a $3$-\texttt{EASPT} $H$ as follows: initially $H$ is a shortest path tree of $G$ then, for each possible failure of an edge $e=(u,v)$ in $T_G(s)$, we augment $H$ by adding the (unique) edge $(x,y)$ of $C_{G}(e)$ that lies on a shortest path $\pi_G^{-e}(v)$ from $s$ to $v$. Notice that this is the only edge of $\pi_G^{-e}(v)$ that is not already in $T$ as both $\pi_T(s,x)$ and $\pi_T(v,y)$ do not contain $e$.

\begin{lemma}
	Algorithm~\ref{alg:3ftspt} computes in polynomial time a 3-{\ttfamily EASPT} structure of size $2n-2$.
\end{lemma}
\begin{proof}
	The claim on the size of $H$ is a direct consequence of the fact that we add at most one replacement edge for each failure, so we only need to prove that $H$ is a 3-{\ttfamily EASPT} structure.
	As $H$ contains all the edges of $T_G(s)$ the condition $d_H(u)=d_G(u) \; \forall u \in V(G)$ is clearly true. Moreover, the above still holds whenever an edge $e \not\in E(T_G(s))$ fails.
	
	Now, let $e=(u,v) \in E(T_G(s))$ be the failed edge and let $t$ be any vertex in $V(G)$.
	If $t$ belongs to $U_{G}(e)$ then $H$ contains the whole shortest path $\pi_G^{-e}(t) = \pi_G(t)$.
	Otherwise, $H$ contains both the path $\pi_G^{-e}(v)$ and the path $\pi_G^{-e}(v,t) = \pi_G(v,t)$ so we can write:
	\[
		d_H^{-e}(t) \le d_G^{-e}(v) + d_G(v,t) \le d_G^{-e}(t) +2 d_G(v,t) \le d_G^{-e}(t) +2 d_G(t) \le 3 d_G^{-e}(t).
	\] \qed
\end{proof}

First, we give a high-level description of our algorithm for computing a $(1+\epsilon)$-\texttt{EASPT} (see Algorithm~\ref{alg:1_epsilon_ftspt}).
We build our structure $H$ by starting from the $3$-\texttt{EASPT} of size $O(n)$ returned by Algorithm~\ref{alg:3ftspt}. Then, our algorithm works in $n-1$ {\em phases}, where each phase considers the failure of an edge of $T$ w.r.t. a fixed preorder visit of the edges, say $e_1,\dots,e_{n-1}$. Let $e=e_h$ be the edge of $T$ of the $h$-th phase of the algorithm. The algorithm checks all the vertices of $G$ in preorder w.r.t.\ $T_G^{-e}(s)$. Whenever a vertex $t$ {\em is bad for} $e$, i.e., $d_{H}^{-e}(t)>(1+\epsilon)d^{-e}(t)$, the algorithm chooses a suitable value $\ell\geq 1$ and adds to $H$ all the last $\ell$ edges of $\pi^{-e}(t)$ that are missing. Notice that all the bad vertices for $e$ must necessarily belong to $D_G(e)$.
As we will see, the choice of $\ell$ is done so that we do not only guarantee that $d_H^{-e}(t)\leq (1+\epsilon)d^{-e}(t)$, but we also obtain a substantial improvement on the stretch factors of the distances from $s$ in $H-e$ for all the first $\ell-1$ predecessors of $t$ in $\pi^{-e}(t)$, say $t_1,\dots,t_{\ell-1}$. Furthermore, as we will prove later, the structure $H$ built by the algorithm after that edge $e$ has been considered guarantees the following  property: For every $h<i\leq n-1$, and for every $x \in \{t\}\cup\{t_1,\dots,t_{\ell-1}\}$, we have that $d_{H}^{-e_i}(x)\leq d_{H}^{-e}(x)$. These are exactly the two key ingredients for the analysis of our algorithm that, combined altogether, allow us to prove that each vertex causes the addition of $O(\epsilon^{-2}\log n)$ edges to $H$ on the average.

The main result we are going to prove in this section is the following:

\begin{theorem}
Given an $n$-vertex non-negatively real weighted graph $G$, a source vertex $s \in V(G)$, and any $\epsilon >0$, the structure $H$ returned in polynomial time by Algorithm~\ref{alg:1_epsilon_ftspt} is a $(1+\epsilon)$-\texttt{EASPT} of $G$ rooted at $s$ of size $O(\frac{n \log n}{\epsilon^2})$.
\end{theorem}

We will prove separately the bound on the stretch factor and on the size of the structure in the next two subsections (see Lemmas \ref{lemma:1_epsilon_ftspt_correctness} and \ref{lemma:size of H}, respectively).

\subsection{Stretch factor of the structure}
Observe that, for every bad vertex $t$ for $e$, the algorithm always adds the last edge of $\pi^{-e}(t)$ to $H$. This is enough to prove the correctness of our algorithm as the following lemma shows.

\begin{lemma}
 \label{lemma:1_epsilon_ftspt_correctness}
 The structure $H$ returned by Algorithm~\ref{alg:1_epsilon_ftspt} is a $(1+\epsilon)$-\texttt{EASPT}.
\end{lemma}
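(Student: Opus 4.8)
The plan is to prove the stronger statement that, at termination, $d_H^{-e}(t)\le(1+\epsilon)\,d_G^{-e}(t)$ for \emph{every} $e\in E(G)$ and every $t\in V(G)$. I would first dispose of the non-tree edges: since $H$ always contains the starting SPT $T=T_G(s)$ (the initial $3$-\texttt{EASPT} is an augmentation of $T$, and the algorithm only adds edges), for $e\notin E(T)$ the tree $T$ survives in $H-e$, whence $d_H^{-e}(t)\le d_T(t)=d_G(t)\le d_G^{-e}(t)$ and the stretch is at most $1$. So only the tree edges $e_1,\dots,e_{n-1}$ processed by the outer loop of Algorithm~\ref{alg:1_epsilon_ftspt} require attention.

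The backbone of the argument is a monotonicity remark: the algorithm never deletes edges, so once the inequality $d_H^{-e}(t)\le(1+\epsilon)d_G^{-e}(t)$ holds at some point of the execution it holds forever after (adding edges cannot increase $d_H^{-e}(t)$, while $d_G^{-e}(t)$ is fixed). Hence it is enough to show that \emph{by the end of phase $h$} the inequality is satisfied for $e=e_h$ and all $t$; together with the non-tree case this yields the lemma.

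To handle phase $h$, write $e=e_h$ and $T'=T_G^{-e}(s)$, and note that, since both are shortest paths in $G-e$ and ties are broken consistently, we may take $\pi_G^{-e}(t)$ to be the root-to-$t$ path of $T'$, so its last edge is $(x,t)$ with $x$ the parent of $t$ in $T'$. Since the phase scans vertices in preorder of $T'$, every ancestor of $t$ — in particular $x$ — is processed before $t$. I would then argue by induction along this preorder that immediately after $t$ is processed (hence, by monotonicity, from then on) $d_H^{-e}(t)\le(1+\epsilon)d_G^{-e}(t)$: it is trivial for $s$; for $t\ne s$, if $t$ is not \emph{bad} when examined there is nothing to prove, and if it is, the algorithm inserts a set $S\subseteq E(\pi_G^{-e}(t))$ containing $(x,t)$, so that, using the inductive hypothesis $d_H^{-e}(x)\le(1+\epsilon)d_G^{-e}(x)$ (still valid by monotonicity) and $w_G(x,t)=d_G^{-e}(t)-d_G^{-e}(x)$,
\begin{align*}
d_H^{-e}(t)&\le d_H^{-e}(x)+w_G(x,t)\le(1+\epsilon)d_G^{-e}(x)+\big(d_G^{-e}(t)-d_G^{-e}(x)\big)\\
&=d_G^{-e}(t)+\epsilon\,d_G^{-e}(x)\le(1+\epsilon)d_G^{-e}(t),
\end{align*}
the last step using $d_G^{-e}(x)\le d_G^{-e}(t)$. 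This closes the induction, and hence the proof.

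I expect no serious obstacle here: the only point requiring care is that the guarantees established for vertices and edges handled earlier must not be spoiled by later insertions, which is exactly what the monotonicity remark ensures. Note also that correctness relies only on $S$ containing the last edge of $\pi_G^{-e}(t)$; the latitude in choosing the other edges of $S$ is precisely what the (considerably more delicate) size analysis will later exploit.
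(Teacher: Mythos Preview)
Your argument is correct and follows essentially the same route as the paper: an induction along the preorder of $T_G^{-e}(s)$, using that the last edge of $\pi_G^{-e}(t)$ is added whenever $t$ is bad, together with the inductive bound on its parent. You are a bit more explicit than the paper in disposing of non-tree edges and in invoking monotonicity to carry the bound past later phases, but the core is identical.
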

\begin{proof}
 Let $\tilde{H}$ be the structure built by the algorithm just before a bad vertex $t$ for an edge $e$ of $T$ is considered.
 Assume by induction that, for every vertex $x$ which has been visited by the algorithm before $t$ in the phase associated with $e$, we have $d^{-e}_{\tilde{H}}(x) \le (1+\epsilon) d^{-e}(x)$.
 Let $f=(t',t)$ be the last edge of $\pi^{-e}(t)$. By induction, $d^{-e}_{\tilde{H}}(t') \le (1+\epsilon) d^{-e}(t')$. Furthermore, as $t$ is bad for $e$, the algorithm adds $f$ to $\tilde{H}$. Since $\tilde{H}+f$ is a subgraph of $H$, we have that
 \begin{multline*}
  d_H^{-e}(t) \le d^{-e}_{\tilde{H} + f}(t) \le d^{-e}_{\tilde{H}}(t') + w(f) \le (1+\epsilon) d^{-e}(t') + w(f) \\
   \le (1+\epsilon) ( d^{-e}(t') + d^{-e}(t',t) ) = (1+\epsilon) d^{-e}(t).
 \end{multline*} \qed
\end{proof}

\section{A $(1+\epsilon)$-\texttt{EASPT} structure}
\label{sct:EASPT}
\begin{algorithm}[!t]
	\DontPrintSemicolon
	\SetKwInOut{Input}{Input}
	\SetKwInOut{Output}{Output}
	
	\Input{A non-negatively real weighted graph $G$, $s \in V(G)$, $\epsilon>0$}
	\Output{A $(1+\epsilon)$-\texttt{EASPT} of $G$ rooted at $s$}

	\BlankLine	
		
	$H \gets$ the 3-\texttt{EASPT} of size $O(n)$ returned by Algorithm~\ref{alg:3ftspt}.
	
	\For{$e \in E(T_G(s))$ in preorder w.r.t.\ $T_G(s)$}
	{
		\For{$t \in V(G)$ in preorder w.r.t.\ $T_G^{-e}(s)$}
		{
			\If(\tcc*[f]{vertex $t$ is \emph{bad} for edge $e$}){$d_H^{-e}(t) > (1+\epsilon)d_G^{-e}(t)$ }
			{
				Select a set of edges $S \subseteq E(\pi_G^{-e}(t))$ (see details after Lemma \ref{lemma:1_epsilon_ftspt_correctness})\;
				$H \gets H + S$ \;
			}
		}
	}
	\Return $H$
	
	\caption{Algorithm for building an $(1+\epsilon)$-\texttt{EASPT}}\label{alg:1_epsilon_ftspt}
\end{algorithm}

\subsection{Size of the structure}
Now we describe the edge selection process and we analyze the size of our final structure. Let $H_0$ be the initial $3$-\texttt{EASPT} structure. Let us fix the failed edge $e=(u,v)$ and a single bad vertex $t$ for $e$.
We call $H'$ the structure built by the algorithm just before $t$ is considered.
Let $f=(x,y)$ with $x \in U_G(e)$ be the unique edge in $C_{G}(e) \cap E(\pi_G^{-e}(t))$. Consider the subpath of $\pi_G^{-e}(t)$ going from $x$ to $t$ and let $x_0, x_1, \dots,x_r$ be its vertices, in order. We consider the set $Z=\{ x_i \, : \, (x_{i-1}, x_i) \not\in E( H^\prime ) \}$, we name its vertices $z_1, \dots, z_k = t$ with $k=|Z|$, in order and we let $z_0=x$ (see Figure~\ref{fig:tree_edge}).
We define $\alpha_i = \frac{d^{-e}_{H'}(z_i)}{d^{-e}(z_i)}$. It follows from the definitions and from the proof of Lemma~\ref{lemma:1_epsilon_ftspt_correctness} that we have $\alpha_0 = 1$, $\alpha_j \le (1+\epsilon)$ for $1 \le j < k$ and $\alpha_k > 1+\epsilon$.

\begin{figure}[t]
	\centering
	\includegraphics[scale=0.90]{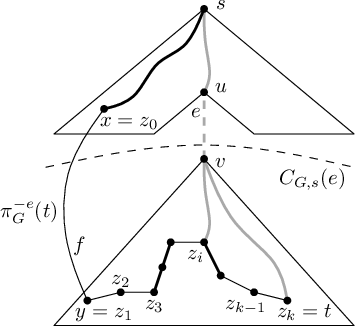}
	\caption{Edge selection phase of Algorithm~\ref{alg:1_epsilon_ftspt} when a bad vertex $t$ for the failing edge $e$ is considered. Bold edges belong to $H'$ while the black path is $\pi_G^{-e}(t)$.}
	\label{fig:tree_edge}
\end{figure}

Think of the edges in $\pi^{-e}(t)$ as being directed towards $t$ for a moment.
In the following we will describe how to select the set $S$ of edges used by the algorithm. In particular, we will select $\eta \ge 1$ edges entering into the last $\eta$ vertices in $Z$.
This choice of $S$ will ensure that the overall decrease of the values $\alpha_i$ in $H' + S$ will be at least $\frac{\epsilon}{ \hn_n} \eta$ where $\hn_n$ denotes the $n$-th \emph{harmonic number}. In particular, we exploit the fact that, after adding the set $S$, each ``new value'' $\alpha_i$ with $i > k - \eta$, will not be larger than $\alpha_{k-\eta}$ as we will show in the following.


Consider the sequence $\gamma_0, \dots, \gamma_k$ where $\gamma_i = 1 + \frac{\epsilon}{\hn_k}  (\hn_k - \hn_{k-i})$. Notice that the sequence is monotonically increasing from $\gamma_0=1$ to $\gamma_k = 1+\epsilon$.
Let $0 \le j < k$ be the largest index such that $\alpha_j \le \gamma_j$. Notice that $j$ always exists as $\alpha_0 = \gamma_0$ and $\alpha_k > \gamma_k$. We set $\eta=k-j$ so that the set $S$ is defined accordingly. Let $U=\{z_{j+1}, \dots, z_k\}$ be the set of vertices for which an incoming edges has been added in $S$.

For every vertex $z \in U$ we define the following path in $H' + S- e$: $P(z) = \pi_{H'}^{-e}(z_j) \circ \pi(z_j, z)$. Notice that $\pi(z_j, z)$ is entirely contained in $H' + S - e$.
We define $\alpha'_i = \frac{w(P(z_i))}{d^{-e}(z_i)}$, and note that $\alpha'_i$ is an upper bound to the stretch of $z$ in $H' + S - e$.

\begin{lemma}
\label{lemma:alpha_order}
For $i > j$, $\alpha'_i \le \alpha_j < \alpha_i$.
\end{lemma}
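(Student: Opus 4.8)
The plan is to prove the two inequalities $\alpha'_i\le\alpha_j$ and $\alpha_j<\alpha_i$ separately. For the first, the key remark is that $\alpha'_i$ is a weighted average (a mediant) of $\alpha_j$ and $1$. Recall $P(z_i)=\pi_{H'}^{-e}(z_j)\circ\pi(z_j,z_i)$, so $w(P(z_i))=d_{H'}^{-e}(z_j)+d^{-e}(z_j,z_i)$, where $d^{-e}(z_j,z_i)$ is the length of the sub-path of $\pi_G^{-e}(t)$ from $z_j$ to $z_i$. First I would observe that, since $z_j$ and $z_i$ both lie on $\pi_G^{-e}(t)$ with $z_j$ preceding $z_i$, the tie-breaking convention forces this sub-path to be exactly $\pi_G^{-e}(z_j,z_i)$ and forces $z_j$ to lie on $\pi_G^{-e}(z_i)$; hence $d^{-e}(z_i)=d^{-e}(z_j)+d^{-e}(z_j,z_i)$, with $d^{-e}(z_j,z_i)=d_G^{-e}(z_j,z_i)$. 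Substituting $d_{H'}^{-e}(z_j)=\alpha_j\, d^{-e}(z_j)$ then gives
\[
\alpha'_i=\frac{\alpha_j\, d^{-e}(z_j)+d^{-e}(z_j,z_i)}{d^{-e}(z_j)+d^{-e}(z_j,z_i)}.
\]
Since $H'$ is a subgraph of $G$ we have $\alpha_j\ge 1$, and since $d^{-e}(z_j,z_i)\ge 0$ the numerator is at most $\alpha_j\big(d^{-e}(z_j)+d^{-e}(z_j,z_i)\big)$, which yields $\alpha'_i\le\alpha_j$. (The degenerate case $z_j=s$, in which $\alpha_j$ is set to $1$ by convention, is immediate since then $\alpha'_i=1$.)

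For the inequality $\alpha_j<\alpha_i$ I would combine the definition of $j$ with the strict monotonicity of $(\gamma_i)$: one checks $\gamma_i-\gamma_{i-1}=\frac{\epsilon}{\hn_k}\cdot\frac{1}{k-i+1}>0$, so $\gamma$ is strictly increasing. If $j<i<k$, then the maximality of $j$ forces $\alpha_i>\gamma_i$, while monotonicity gives $\gamma_i>\gamma_j\ge\alpha_j$, so $\alpha_i>\alpha_j$. The boundary index $i=k$ is not covered by the maximality of $j$ (which ranges over $\{0,\dots,k-1\}$), so there I would instead use the two facts already recorded before the lemma, namely $\alpha_k>\gamma_k=1+\epsilon$ and $\alpha_j\le\gamma_j<\gamma_k=1+\epsilon$ (the strict inequality holding because $j<k$), which again give $\alpha_j<\alpha_k$. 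Since $\alpha'_i$ is only defined for $j<i\le k$, this covers all relevant indices.

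I expect the only genuine subtlety to be the very first step above: justifying the additive split $d^{-e}(z_i)=d^{-e}(z_j)+d^{-e}(z_j,z_i)$ and, correspondingly, that $w(P(z_i))$ equals $d_{H'}^{-e}(z_j)$ plus the \emph{true} $G-e$ distance between $z_j$ and $z_i$. This hinges entirely on the fact that $z_0,\dots,z_k$ all lie, in order, on the single shortest path $\pi_G^{-e}(t)$, together with the paper's convention that a sub-path of a chosen shortest path is again the chosen shortest path between its endpoints. Once that is in place, the remainder is the one-line mediant inequality above and a comparison between the monotone sequences $(\alpha_i)_{i>j}$ and $(\gamma_i)$, the only case distinction being $i<k$ versus $i=k$.
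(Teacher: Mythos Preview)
Your proposal is correct and follows essentially the same route as the paper. The paper proves $\alpha_j<\alpha_i$ in one line via $\alpha_j\le\gamma_j<\gamma_i<\alpha_i$ (your case split $i<k$ versus $i=k$ just makes explicit why $\alpha_i>\gamma_i$ holds at the endpoint), and proves $\alpha'_i\le\alpha_j$ by exactly the mediant computation you describe. One small point: in the paper $P(z_i)=\pi_{H'}^{-e}(z_j)\circ\pi(z_j,z_i)$ with $\pi(z_j,z_i)$ a shortest path in $G$, so strictly $w(P(z_i))=d_{H'}^{-e}(z_j)+d(z_j,z_i)$; the paper then uses $d(z_j,z_i)\le d^{-e}(z_j,z_i)$ rather than the equality you assert, but this only strengthens your displayed bound and the conclusion is unchanged.
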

\begin{proof}
By definition of $j$, we have $\alpha_j \le \gamma_j < \gamma_i < \alpha_i$.
Now we prove $\alpha'_i \le \alpha_j$:
\[
	\alpha'_i = \frac{w(P(z))}{d^{-e}(z_i)} \!=\! \frac{d_{H'}^{-e}(z_j) + d(z_j, z_i)}{d^{-e}(z_i)} \!\le\!
	\frac{\alpha_j d^{-e}(z_j) + d^{-e}(z_j, z_i)}{d^{-e}(z_i)} \!\le\! \frac{\alpha_j d^{-e}(z_i)}{d^{-e}(z_i)} = \alpha_j.
\]\qed
\end{proof}

\begin{figure}
	\centering
	\includegraphics[scale=1]{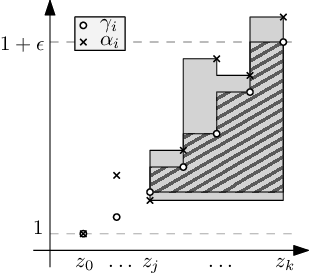}
	\caption{Representation of the sequences $\alpha_i$ and $\gamma_i$. The gray area is a lower bound to the overall decrease of the $\alpha_i$ values w.r.t.\ $\alpha'_i$ with $i > j$. This area is, in turn, lower bounded by the area of the striped region which is $\frac{\epsilon}{\hn_k}(k-j)$.}
	\label{fig:stretch_decrease}
\end{figure}

We now lower-bound the overall decrease of the values $\alpha'_i$'s w.r.t.\ the corresponding $\alpha_i$'s by using the following inequalities (see Figure~\ref{fig:stretch_decrease}):
\begin{multline}
	\label{eq:stretch_decrease_single_bad_vertex}
	\sum_{z \in U} \left( \frac{d^{-e}_{H'}(z)}{d^{-e}(z)} - \frac{w(P(z))}{d^{-e}(z)} \right) =
	\sum_{i=j+1}^k (\alpha_i - \alpha'_i) \ge \sum_{i=j+1}^k (\alpha_i - \alpha_j) \ge \sum_{i=j+1}^k (\gamma_i - \gamma_j) \\
	= \frac{\epsilon}{\hn_k} \sum_{i=j+1}^k \left( \hn_{k-j} - \hn_{k-i} \right)
	= \frac{\epsilon}{\hn_k} (k-j)
	\ge \frac{\epsilon}{\hn_n} \eta.
\end{multline}

\noindent
where in the last but one step we used the well-known equality that for every $j \le k$, $\sum_{i=j+1}^k \left( \hn_{k-j} - \hn_{k-i} \right) = k-j$.

The above selection procedure is repeated by the algorithm for every failed edge $e_h$ and for every corresponding bad vertex. We now focus on the $h$-th phase of the algorithm.
We call $B_h$ the set of all the bad vertices considered in this phase and, for every $t \in B_h$, we call $U(t)$ the corresponding set $U$, as defined above.
Moreover, let $U_h = \cup_{t \in B} U(t)$, and let $V_h = \bigcup_{i=1}^h U_i$ (notice that $V_0 = \emptyset$).
Notice that the sets $U(t)$ are pairwise disjoint since, once $z \in U(t)$ we add the edge of $\pi^{-e}(t)$ entering $z$ and hence $z$ cannot belong to any other set $U(t')$ where $t'$ is a bad vertex which is considered after $t$ in phase $h$. Hence, we let $P_h(z)$ be the unique path $P(z)$ which is built during phase $h$.
Finally, let  $H'_h$, $H(t)$, and $H_h$ be the structures built by the algorithm at the beginning of phase $h$, just before the bad vertex $t \in B_h$ is processed, and at the end of phase of phase $h$, respectively.

Let $m_h$ be the number of new edges added during the phase $h$. We can now prove:
\begin{lemma}
	\label{lemma:stretch_decrease}
	$\displaystyle \sum_{z \in U_h} \left( \frac{d^{-e_h}_{H'_h}(z)}{d^{-e_h}(z)} - \frac{w(P_h(z))}{d^{-e_h}(z)} \right) \ge m_h \frac{\epsilon}{\hn_n}$.
\end{lemma}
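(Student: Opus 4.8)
The plan is to add up, over all bad vertices handled in phase $h$, the single--bad--vertex estimate already proved above (the displayed inequality ending in $\frac{\epsilon}{\hn_n}\eta$), and then to argue that this sum telescopes down to the claimed quantity. Enumerate the bad vertices treated in phase $h$, in processing order, as $t_1,\dots,t_p$; for the $q$-th of them let ${H'}^{(q)}$ be the structure in force when $t_q$ is examined, and let $U^{(q)}$, $S^{(q)}$ and $P^{(q)}(\cdot)$ be the corresponding set $U$, set $S$ of added edges, and path map $P(\cdot)$. Instantiating that inequality with $e=e_h$ and using $|S^{(q)}|=|U^{(q)}|$ gives
\[
\sum_{z\in U^{(q)}}\left(\frac{d^{-e_h}_{{H'}^{(q)}}(z)}{d^{-e_h}(z)}-\frac{w(P^{(q)}(z))}{d^{-e_h}(z)}\right)\ \ge\ \frac{\epsilon}{\hn_n}\,|S^{(q)}|.
\]
Summing over $q=1,\dots,p$ and using that the number $m_h$ of edges genuinely inserted during phase $h$ is at most $\sum_q|S^{(q)}|$, the right-hand side becomes $\ge\frac{\epsilon}{\hn_n}m_h$; hence it remains to show that $\sum_q\sum_{z\in U^{(q)}}(\cdots)$ is \emph{at most} the left-hand side of the statement.

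For this I would regroup the double sum by vertex and telescope. Fix $z\in U_h$ and let $q_1<\dots<q_\ell$ be the indices of the bad vertices whose associated set contains $z$; by definition $P_h(z)=P^{(q_\ell)}(z)$. Since Algorithm~\ref{alg:1_epsilon_ftspt} never deletes edges, the structures form a nested chain
\[
H'_h\ \subseteq\ {H'}^{(q_1)}\ \subseteq\ {H'}^{(q_1)}+S^{(q_1)}\ \subseteq\ {H'}^{(q_2)}\ \subseteq\ \cdots\ \subseteq\ {H'}^{(q_\ell)}+S^{(q_\ell)},
\]
so the distance from $s$ to $z$ that avoids $e_h$, computed in these structures, is non-increasing along it. Moreover each $P^{(q)}(z)$ is, by construction, of the form $\pi^{-e_h}_{{H'}^{(q)}}(z')\circ\pi(z',z)$, hence an $s$--$z$ path lying in ${H'}^{(q)}+S^{(q)}$ and avoiding $e_h$ (the remark made just before Lemma~\ref{lemma:alpha_order}), so $d^{-e_h}_{{H'}^{(q)}+S^{(q)}}(z)\le w(P^{(q)}(z))$. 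Writing $\beta_r=d^{-e_h}_{{H'}^{(q_r)}}(z)/d^{-e_h}(z)$ and $\beta'_r=w(P^{(q_r)}(z))/d^{-e_h}(z)$, these facts give $\beta_{r+1}\le\beta'_r$ for $r<\ell$ and $\beta_1\le d^{-e_h}_{H'_h}(z)/d^{-e_h}(z)$, so
\[
\sum_{r=1}^{\ell}(\beta_r-\beta'_r)=\beta_1-\beta'_\ell+\sum_{r=1}^{\ell-1}(\beta_{r+1}-\beta'_r)\ \le\ \beta_1-\beta'_\ell\ \le\ \frac{d^{-e_h}_{H'_h}(z)}{d^{-e_h}(z)}-\frac{w(P_h(z))}{d^{-e_h}(z)}.
\]
Summing over $z\in U_h$ gives exactly the required inequality, completing the proof.

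The point to be careful about --- and the reason the statement is phrased with the start-of-phase structure $H'_h$ and the \emph{last} path $P_h(z)$ --- is that one vertex $z$ may belong to several sets $U^{(q)}$ while the structure against which its stretch is measured keeps growing within the phase, so a naïve sum of the per-bad-vertex bounds would over-count its repeated improvements. The monotonicity of the algorithm (edges are only added) together with the fact that $w(P^{(q)}(z))/d^{-e_h}(z)$ upper-bounds the true stretch of $z$ immediately after step $q$ is precisely what makes the intermediate terms cancel, collapsing $z$'s cumulative contribution to the single drop from its start-of-phase stretch to its stretch along $P_h(z)$. The remaining checks are routine: every distance $d^{-e_h}_{\cdot}(z)$ above is finite because the initial $3$-\texttt{EASPT} $H_0$ is contained in every structure the algorithm produces and already connects $z$ to $s$ in $G-e_h$, and the fact that each $P^{(q)}(z)$ is a genuine $e_h$-avoiding $s$--$z$ path in the current structure is the one already noted when $P(z)$ was defined.
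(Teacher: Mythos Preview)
Your proof is correct and follows the paper's intended approach: the paper simply asserts ``By summing over all the bad vertices for edge $e_h$'' and states the lemma, whereas you supply the details of that summation, in particular the telescoping argument that handles a vertex $z$ appearing in several sets $U^{(q)}$ within the same phase. This is the same route as the paper, just made rigorous where the paper is terse.
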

\begin{proof}
	For a bad vertex $t \in B_h$, let $\eta_t$ be the number of edges selected by the algorithm, i.e. $|S|$, when $t$ is considered.
	By summing Equation~\ref{eq:stretch_decrease_single_bad_vertex} over all vertices $t \in B_h$, we obtain:
	\begin{multline*}
		\sum_{z \in U_h} \left( \frac{d^{-e_h}_{H'_h}(z)}{d^{-e_h}(z)} - \frac{w(P_h(z))}{d^{-e_h}(z)} \right)
		= \sum_{t \in B_h} \sum_{z \in U(t)} \left( \frac{d^{-e_h}_{H'_h}(z)}{d^{-e_h}(z)} - \frac{w(P_h(z))}{d^{-e_h}(z)} \right) \\
		\ge \sum_{t \in B_h} \sum_{z \in U(t)} \left( \frac{d^{-e_h}_{H(t)}(z)}{d^{-e_h}(z)} - \frac{w(P_h(z))}{d^{-e_h}(z)} \right)
		\ge \sum_{t \in B_h}  \eta_t \frac{\epsilon}{\hn_n} \ge m_h \frac{\epsilon}{\hn_n},
	\end{multline*}
	where we used the facts that the sets $U(t)$ are pairwise disjoint, and that every $H(t)$ is a supergraph of $H'_h$.\qed
\end{proof}

Now, let us define a function $\phi_h(z)$, first we set $\phi_0(z) = \frac{6}{\epsilon}d(z)$ for every $z \in V$, and then we recursively define:
\[
	\phi_h(z) = \begin{cases}
		w(P_h(z)) & \mbox{if } z \in U_h \\
		\phi_{h-1}(z) & \mbox{if } z \not\in U_h
	\end{cases}
\]

\noindent We will show that if $z \in V_h$, then $\phi_{h-1}(z)$ is an upper bound to $d_{H'_h}^{-e_h}(z)$. In order to do so we separately consider the cases $z \in U_h \setminus V_{h-1}$ and $z \in V_{h-1}$ in the following two lemmas:

\begin{lemma}
	\label{lemma:stretch_bound}
	For every $z \in U_h \setminus V_{h-1}$ we have $\phi_{h-1}(z) \ge d_{H'_h}^{-e_h}(z)$.
\end{lemma}
\begin{proof}
	Since $z \in U_h$ we know that an incoming edge to $z$ has been selected when the algorithm was considering some bad vertex $t$ for the edge $e_h$. We have:
	\begin{multline*}
		d^{-e_h}(v) + d(v,t) > (1+\epsilon) d^{-e_h}(t) = (1+\epsilon)( d^{-e_h}(z) + d(z, t) ) \\
		\ge  (1+\epsilon)( d^{-e_h}(z) + \left| d(z) - d(t) \right| ).
	\end{multline*}
	\noindent Moreover, we also have:
	\[
		d^{-e_h}(v) + d(v,t) \le  d^{-e_h}(z) + d(z, v) + d(v,t) \le d^{-e_h}(z) + d(z) + d(t)
	\]
	
	\noindent The above inequalities together imply:
	\[
		d^{-e_h}(z) < \frac{d(z)+d(t)- (1+\epsilon) \left| d(z) - d(t) \right|}{\epsilon}.
	\]

	\noindent If $d(z) \ge  d(t)$, the above formula becomes:
	\[
		d^{-e_h}(z) < \frac{d(z)+d(t)- (1+\epsilon) ( d(z) - d(t) ) }{\epsilon} = \frac{(2+\epsilon) d(t) - \epsilon d(z) }{\epsilon} \le \frac{ 2 d(z) }{\epsilon}.
	\]
	\noindent Otherwise, $d(z) < d(t)$ and we have:
	\[
		d^{-e_h}(z) < \frac{d(z)+d(t)- (1+\epsilon) (  d(t) - d(z)  ) }{\epsilon} = \frac{(2+\epsilon) d(z) - \epsilon d(t) }{\epsilon} \le \frac{ 2 d(z) }{\epsilon}.
	\]
	
	\noindent As $H_h$ is a supergraph of $H_0$, which is a $3$-\texttt{EASPT}, we immediately have:
	\[
		d_{H_h'}^{-e_h}(z) \le d_{H_0}^{-e_h}(z)\le 3 d^{-e_h}(z) < \frac{6}{\epsilon} d(z) = \phi_{h-1}(z).
	\]\qed
\end{proof}

We now consider the remaining case:
\begin{lemma}
	\label{lemma:phi_ub}
	For $z \in V_{h-1}$, $\phi_{h-1}(z) \ge d_{H'_h}^{-e_h}(z)$.
\end{lemma}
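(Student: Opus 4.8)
The plan is to show that, for every $z\in V_{h-1}$, the graph $H'_h-e_h=H_{h-1}-e_h$ contains an $s$--$z$ path of weight at most $\phi_{h-1}(z)$; since $H'_h=H_{h-1}$, this gives $d^{-e_h}_{H'_h}(z)\le\phi_{h-1}(z)$.

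\emph{Step 1: identify the path realizing $\phi_{h-1}(z)$.} Unwinding the recursive definition of $\phi$, for $z\in V_{h-1}$ there is a largest index $h'\le h-1$ with $z\in U_{h'}$, and then $\phi_{h-1}(z)=w(P_{h'}(z))$. Recall that $P_{h'}(z)$ is a simple $s$--$z$ path contained in $H_{h'}\subseteq H_{h-1}$, that it avoids $e_{h'}$, and that it decomposes as $P_{h'}(z)=\pi_{H'_{h'}}^{-e_{h'}}(z_j)\circ\pi(z_j,z)$, where $\pi(z_j,z)$ is a subpath of the replacement path $\pi_G^{-e_{h'}}(t)$ lying inside $D_G(e_{h'})$; in particular $U_{h'}\subseteq D_G(e_{h'})$, so $z\in D_G(e_{h'})$.

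\emph{Step 2: cut $P_{h'}(z)$ at the boundary of $D_G(e_h)$.} Walk along $P_{h'}(z)$ from $s$ toward $z$ and let $z^\star$ be its last vertex not in $D_G(e_h)$; it exists because $s\notin D_G(e_h)$. If $z^\star=z$, i.e.\ $z\notin D_G(e_h)$, then the path of $T$ from $s$ to $z$ lies in $T\subseteq H_{h-1}$, avoids $e_h$, and has weight $d_G(z)\le w(P_{h'}(z))$, so we are done. Otherwise let $g=(z^\star,b)$ be the edge of $P_{h'}(z)$ leaving $z^\star$; by the choice of $z^\star$ we have $g\in C_G(e_h)$ and the suffix $R:=P_{h'}(z)[z^\star,z]$ stays inside $D_G(e_h)$ after $g$. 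If $g\ne e_h$, then $R$ avoids $e_h$ (no edge with both endpoints in $D_G(e_h)$ can equal $e_h$, since $e_h$ has an endpoint in $U_G(e_h)$), so $R\subseteq H_{h-1}-e_h$; moreover the $T$-path to $z^\star$ avoids $e_h$, whence $d^{-e_h}_{H_{h-1}}(z^\star)\le d_G(z^\star)\le w\bigl(P_{h'}(z)[s,z^\star]\bigr)$ because the prefix $P_{h'}(z)[s,z^\star]$ is an $s$--$z^\star$ path of $G$. Concatenating this path with $R$ gives $d^{-e_h}_{H_{h-1}}(z)\le w\bigl(P_{h'}(z)[s,z^\star]\bigr)+w(R)=w(P_{h'}(z))=\phi_{h-1}(z)$.

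\emph{Step 3: the remaining case $g=e_h$ (the crux).} Here $z^\star=u_h$ and $P_{h'}(z)$ makes its last entrance into $D_G(e_h)$ through the tree edge $e_h=(u_h,v_h)$ itself. This is where the preorder is used: since $h'<h$ and $z\in D_G(e_{h'})\cap D_G(e_h)$, the two tree edges are nested in $T$, and processing $e_{h'}$ before $e_h$ in preorder forces $D_G(e_h)\subsetneq D_G(e_{h'})$; thus $e_h$ lies strictly inside the subtree $D_G(e_{h'})$ and is a proper ancestor edge of $z$ (in fact $z\ne v_h$, since the last edge of $\pi_G^{-e_{h'}}(t)$ entering $z$ is not in $H_0\supseteq T$). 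The plan is to show this configuration cannot occur, i.e.\ that $P_{h'}(z)$ never re-enters $D_G(e_h)$ precisely through $e_h$, which reduces this case to Step 2. I would split on whether the occurrence of $e_h$ on $P_{h'}(z)$ falls in $\pi_{H'_{h'}}^{-e_{h'}}(z_j)$ or in $\pi(z_j,z)$; by the subpath-consistency of shortest paths the portion of $P_{h'}(z)$ from $v_h$ to $z$ is then a shortest path (in $H'_{h'}-e_{h'}$, resp.\ in $G-e_{h'}$) between two vertices of $D_G(e_h)$, and one must argue that such a shortest path stays inside $D_G(e_h)$ and that entering $D_G(e_h)$ via the tree edge $e_h$ — rather than via the swap edge of $e_h$ or another edge of $C_G(e_h)$ already present in $H_{h-1}$ — is never optimal. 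I expect this to be the main obstacle: making it rigorous will likely require strengthening the statement into an induction on $h$ that additionally records, for each $z\in V_h$, that once $P_{h'}(z)$ descends below $e_{h'}$ it enters every subtree only through a non-tree cut edge; the present lemma is then the instance of this invariant applied to $e_h$.
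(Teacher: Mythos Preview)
Your Steps 1 and 2 are correct and constitute the backbone of the argument: unwind $\phi$ to the concrete path $P_{h'}(z)\subseteq H_{h'}\subseteq H_{h-1}$, then reroute its prefix through $T$ up to the last vertex $z^\star\notin D_G(e_h)$. When the crossing edge $g\neq e_h$, the resulting walk lies in $H_{h-1}-e_h$ and has weight at most $w(P_{h'}(z))$, exactly as you show. The paper defers this proof to its full version, so no line-by-line comparison is possible, but this rerouting idea is the natural one.

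Step~3, however, is a genuine gap and you have not closed it. Your first line of attack---arguing that the configuration ``$P_{h'}(z)$ last enters $D_G(e_h)$ through $e_h$'' cannot occur---does not go through: take a tree path $s\,a\,b\,c\,t$ with $e_{h'}=(s,a)$, $e_h=(b,c)$, and a single non-tree edge $(s,b)$ of suitable weight; then $\pi_G^{-e_{h'}}(t)=s\,b\,c\,t$ traverses $e_h$, and the last entrance into $D_G(e_h)=\{c,t\}$ is precisely via $e_h$. So both halves of $P_{h'}(z)$ may legitimately use $e_h$, and ``never optimal to enter via $e_h$'' is simply false. Your second line---strengthening the statement to an inductive invariant recording, for every $z\in V_h$, that some $s$--$z$ path of weight $\le\phi_h(z)$ in $H_h$ enters each subtree below $e_h$ only through a non-tree cut edge---is the right direction, but it is not carried out, and the propagation step is nontrivial: when $z$ enters $U_h$, the new prefix $\pi_{H'_h}^{-e_h}(z_j)$ is an arbitrary shortest path in $H'_h-e_h$ and can again use deeper tree edges, so one must reapply the Step-2 rerouting together with the inductive hypothesis at the vertex where $e_h$ (or a deeper tree edge) is crossed. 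As written, the proposal stops exactly at the point where the real work begins.
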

\begin{proof}
	We show that the weight of every path $P_h(z)$ built by the algorithm when $e_h=(u,v)$ fails is an upper bound to  $d^{-e_{h'}}_{H'_{h'}}(z)$ for every $h' > h$. This will immediately imply the claim.
	To prove the above we argue that $P_h(z)$ is vertex disjoint from $\pi^{-e_h}(v,z)$ (except for $z$).	
	As a consequence, when $e_{h'}$ fails either $P_h(z)$ is still in $H'_{h'} - e_{h'}$ or $e_{h'}$ is not in $\pi(z)$, hence $d^{-e_{h'}}_{H'_{h'}}(z) = d(z)$.
	
	Let $H'$ be the structure constructed by the algorithm just before $P_h(z)$ is built, $t$ be the corresponding bad vertex, and $z_j$ be the vertex chosen as described above.
	Recall that $z=z_i$ for some $i > j$, and that $P_h(z) = \pi_{H'}^{-e_h}(z_j) \circ \pi(z_j, z)$.
	Suppose by contradiction that $P_h(z)$ and $\pi(v, z)$ intersect at some vertex $q \not= z$. Clearly $q \in D(e)$.
	If $q \in V(\pi(z_j, z))$ then $\pi(z_j, z)$ contains $\pi(q, z)$ as a subpath.\footnote{This is due to the tie-breaking rule discussed before which gives priority to edges in $T$.} As $\pi(z_j, z)$ is, in turn, a subpath of $\pi^{-e_h}(z)$, this implies that the edge preceding $z$ in $\pi^{-e_h}(z)$ belongs to $T \subseteq H_0$ and this contradicts the definition of $z$.
	
	Otherwise $q \in V(\pi_{H'}^{-e_{h}}(z_j))$. As $z_j$ precedes $z$ in $\pi^{-e}(t)$ we have $d^{-e}(z_j) \le d^{-e}(z)$. Since $\pi^{-e}(q,z) = \pi(q,z)$ which is in $H'$, we can write:
	\begin{multline*}
		\alpha_i = \frac{ d_{H'}^{-e}(z) }{d^{-e}(z)} \le \frac{ d_{H'}^{-e}(q) + d^{-e}(q, z) }{d^{-e}(z)} \le
		\frac{ d_{H'}^{-e}(q) + d^{-e}(q, z_j) + d^{-e}(z_j, z) }{d^{-e}(z)} = \\
		 \frac{ d_{H'}^{-e}(z_j) + d^{-e}(z_j, z) }{d^{-e}(z_j) + d^{-e}(z_j, z)} \le \max\left\{ \frac{ d^{-e}_{H'}(z_j)}{d^{-e}(z_j)}, \frac{d^{-e}(z_j, z)}{d^{-e}(z_j, z)}\right\} = \max\{ \alpha_j, 1 \}  = \alpha_j
	\end{multline*}
	where we used that for every $a,b,c,d>0$, we have that $\frac{a+b}{c+d} \le \max\left\{ \frac{a}{c}, \frac{b}{d}  \right\}$, and the inequality $\alpha_j \ge 1$.
	The above contradicts Lemma~\ref{lemma:alpha_order}.\qed
\end{proof}

To summarize, combining Lemma~\ref{lemma:stretch_bound} and \ref{lemma:phi_ub} together, we immediately have:
\begin{corollary}
	\label{corollary:phi_ub}
	If $z \in V_h$, then $\phi_{h-1}(z) \ge d_{H'_h}^{-e_h}(z)$.
\end{corollary}

Next lemma shows that $\phi_h(z)$ is monotonically non-increasing w.r.t.\ $h$:


\begin{lemma}
	\label{lemma:path_improves}
	For every $h \ge 1$, $\phi_{h-1}(z) \ge \phi_{h}(z)$.
\end{lemma}
\begin{proof}
	If $z \not\in U_h$ then, by definition, we have $\phi_{h-1}(z) = \phi_{h}(z)$.
	Otherwise $z \in U_h$, let $\tilde{H}$ be the structure constructed by the algorithm just before $P_h(z)$ is built and recall that $z=z_i$ for some $i$. As $z \in U_h$ we have $H'_h \subseteq \tilde{H}$, moreover Corollary~\ref{corollary:phi_ub} holds, hence we can write $\phi_{h-1}(z) \ge d_{H'_h}^{-e_h}(z) \ge d_{\tilde{H}}^{-e_h}(z)$. The claim follows as we have:
	\[
		d_{\tilde{H}}^{-e_h}(z) \ge \phi_{h}(z) \iff \frac{d_{\tilde{H}}^{-e_h}(z)}{d^{-e_h}(z)} \ge \frac{w(P_h(z))}{d^{-e_h}(z)} \iff \alpha_i \ge \alpha'_i
	\]
	\noindent which is true by Lemma~\ref{lemma:alpha_order}.\qed
\end{proof}

We now define a non-increasing global potential function $\Phi(h)$ for $0<h \le n-1$:
\[
	\Phi(h) = \sum_{z \in V \setminus \{s\}} \frac{\phi_h(z)}{d(z)}.
\]

\noindent
The following lemma bounds the decrease of $\Phi$ after each phase of the algorithm:
\begin{lemma}
	\label{lemma:stretch_delta}
	$\Phi(h-1) - \Phi(h) \ge m_h \frac{\epsilon}{\hn_n}$.
\end{lemma}
\begin{proof}
	Using the definitions we have:
	\begin{multline*}
		\Phi(h-1) - \Phi(h) = \sum_{z \in V \setminus \{s\}} \frac{\phi_{h-1}(z)}{d(z)} - \sum_{z \in V \setminus \{s\}} \frac{\phi_{h}(z)}{d(z)} = \\
		\sum_{z \in U_h} \frac{\phi_{h-1}(z) - \phi_{h}(z)}{d(z)} + \sum_{z \in V \setminus (U_h \cup \{s\})} \!\! \frac{\phi_{h-1}(z) - \phi_{h}(z)}{d(z)} = \sum_{z \in U_h} \frac{\phi_{h-1}(z) - \phi_{h}(z)}{d(z)}		
	\end{multline*}

	\noindent where the latter equality follows from the fact that $\phi_h(z) = \phi_{h-1}(z)$ whenever $z \not\in U_h \cup \{s\}$. Starting from the latter quantity, we use Lemma \ref{lemma:path_improves} and then Corollary~\ref{corollary:phi_ub} to write:
	\[
		\sum_{z \in U_h} \frac{\phi_{h-1}(z) - \phi_{h}(z)}{d(z)} \ge \sum_{z \in U_h} \frac{\phi_{h-1}(z) - \phi_{h}(z)}{d^{-e_h}(z)} \ge \sum_{z \in U_h} \left( \frac{ d_{H'_h}(z)}{d^{-e_h}(z)} - \frac{w(P_h(z))}{d^{-e_h}(z)} \right).
	\]
	\noindent which is at least $m_h \frac{\epsilon}{\hn_n}$ as shown by Lemma~\ref{lemma:stretch_decrease}.\qed
\end{proof}

	We are finally able to prove the following:
\begin{lemma}\label{lemma:size of H}
	The size of the structure $H$ returned by Algorithm~\ref{alg:1_epsilon_ftspt} is $O(\frac{n \log n}{\epsilon^2})$.
\end{lemma}
\begin{proof}
	Since $H_0$ contains $O(n)$ edges, we only focus on bounding the number $\mu = \sum_{h=1}^{n-1} m_h$ of edges in $E(H)\setminus E(H_0)$.
	Notice that, by definition of $\phi_0(z)$, we have $\Phi(0) = \sum_{z \in V \setminus \{s\}} \frac{\phi_0(z)}{d(z)} \le \frac{6}{\epsilon} n$. Moreover, as every $\phi_h(z)$ is non-negative, $\Phi(n-1) \ge 0$ holds.
	Using these inequalities together with Lemma~\ref{lemma:stretch_delta}, we can write:	
	\[
		\frac{6}{\epsilon} n \ge \Phi(0) - \Phi(n-1) = \sum_{h=1}^{n-1} \left( \Phi(h-1) - \Phi(h) \right) \ge
	\frac{\epsilon}{\hn_n} \sum_{h=1}^{n-1} m_h = \frac{\epsilon}{\hn_n} \mu
	\]
	which can be solved for $\mu$ to get $\mu= O(\frac{n \log n}{\epsilon^2})$.\qed
\end{proof}

\section{A $(1+\epsilon)$-\texttt{VASPT} structure}\label{section:VASPT}
\label{sct:VASPT}
In this section we extend our previous $(1+\epsilon)$-\texttt{EASPT} structure to deal with vertex failures.
In order to do so we build a different initial subgraph $H_0$, which is a $3$-\texttt{VASPT} having suitable properties that we will describe later.
Then we use the natural extension of Algorithm~\ref{alg:1_epsilon_ftspt} where we consider (in preorder) vertex failures instead of edge failures.

The construction of the subgraph $H_0$ is similar to that given by Baswana and Khanna \cite{BK13} for the related problem of computing a vertex-fault-tolerant {\ttfamily SDSO} which reports (post-failure) $3$-approximate distances from $s$. In particular, the key difference between their  construction and ours is pointed out within the proof of the forthcoming Lemma \ref{lemma:bad_vertices_down}, and such a difference is instrumental to guarantee the correctness of our approach.
In the following, we first describe the construction of our structure $H_0$, and then we argue on how the analysis for the edge-failure case can be adjusted to show the same bound on the size of $H$ for the vertex failure case as well.

Initially, $H_0$ is equal to $T$. Then, proceeding as proposed in \cite{BK13}, $T$ is decomposed into ancestor-leaf vertex-disjoint paths in the following recursive way: select a path $Q$ from the root of $T$ to a leaf such that the removal of $Q$ splits the tree into a forest where the size of each subtree is at most half the size of the original tree, and then proceed recursively on each subtree.
After this preliminary path-decomposition step of $T$, for each generated path an approximate structure is built. This structure will provide approximate distances towards the vertices $V(G) \setminus \{ u \}$ whenever any vertex $u$ along the path fails. The union of $T$ with all these structures will form $H_0$.

Let us then describe how to build the initial structure for a fixed path $Q$ of the previous decomposition. Let $q$ be the starting vertex of $Q$, and let $T_q$ be the subtree of $T$ rooted at $q$. Moreover, let $u \in V(Q)$ be a failing vertex, and let $v$ be the next vertex in $Q$.\footnote{W.l.o.g. we are assuming that the failing vertex $u$ is not a leaf, as otherwise $T - u$ is already a SPT of $G - u$.} Similarly to what is done in \cite{BK13}, we partition the vertices of the forest $T - u$ into three sets: (i) the \emph{up set} $U$ containing all the vertices of the tree rooted at $s$, (ii) the \emph{down set} $D$ containing all the vertices of the tree rooted at $v$, and (iii) the \emph{others set} $O$ containing all the remaining vertices (see Figure~\ref{fig:tree_vertex}).

\begin{figure}[!t]
	\centering
	\includegraphics[scale=.85]{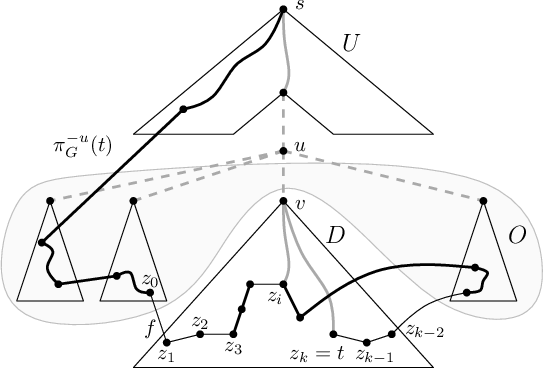}
	\caption{Edge selection phase of the vertex-version of Algorithm~\ref{alg:1_epsilon_ftspt} when a bad vertex $t$ for the failing vertex $u$ is considered. Bold edges belong to $H'$ while the black path is $\pi_G^{-u}(t)$. Notice that all $z_i$s belong to the down set $D$.}
	\label{fig:tree_vertex}
\end{figure}

In order to select the set of additional edges associated with $Q$, we construct a SPT $T'$ of $G - u$ and we imagine that its edges are directed towards the leaves. We select all the edges of $E(T') \setminus E(T)$ that do not lead to a vertex in $D$, plus the unique edge of $\pi^{-u}(v)$ that crosses the cut induced by the sets $U \cup O$ and $D$.
Notice that $T - u$ contains all the paths in $T'$ towards the vertices in $U$, and that each vertex has at most one incoming edge in $T'$. This implies that the number of selected edges is at most $|O|+1$.

The above procedure is repeated for all the failing vertices of $Q$, in order. As the sets $O$ associated with the different vertices are disjoint we have that, while processing $Q$, at most $|V(T_q)|+|Q| = O(|V(T_q)|)$ edges are selected. Finally, the procedure is repeated for all the paths of the decomposition, and since such a decomposition is done as suggested in \cite{BK13}, it immediately follows that the size of the entire structure $H_0$ is $O(n \log n)$.

We now prove some useful properties of the structure $H_0$. First of all, observe that, by construction and similarly to the edge-failure case, we immediately have:
\begin{lemma}
	\label{lemma:vertex_stretch}
	Let $u$ be a failed vertex and consider another vertex $z \not= u$. We have: (i) $d_{H_0}^{-u}(v) = d^{-u}(v)$, and (ii) for $z \in D$, it holds $d_{H_0}^{-u}(z) \le 3 d^{-u}(z)$.
\end{lemma}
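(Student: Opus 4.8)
\textbf{Proof proposal for Lemma~\ref{lemma:vertex_stretch}.}

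The plan is to verify the two claims directly from the way the auxiliary edges were chosen when processing the failing vertex $u$ along its path $Q$ in the decomposition. Recall that for $u$ we fixed a shortest-path tree $T'$ of $G-u$, oriented towards the leaves, and added to $H_0$: (a) every edge of $E(T')\setminus E(T)$ whose head is \emph{not} in the down set $D$, together with (b) the single edge of $\pi^{-u}(v)$ crossing the cut $(U\cup O,\,D)$. Also recall that $T-u$ already contains every $T'$-path from $s$ to any vertex of $U$ (those paths never touch $u$). Throughout I will use that, by our tie-breaking convention, $\pi^{-u}(v)$ and more generally the paths in $T'$ decompose consistently, so that a prefix of a shortest path in $G-u$ is again a shortest path in $G-u$.

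For part (i), consider $\pi_G^{-u}(v)$, the shortest $s$–$v$ path in $G-u$. Walk along it from $s$; by definition of the sets, it starts in $U\cup O$ and ends at $v\in D$, so it crosses the cut $(U\cup O,\,D)$ at least once; let $f=(a,b)$ be the \emph{first} such crossing edge, with $a\in U\cup O$ and $b\in D$. The prefix from $s$ to $a$ is a shortest path in $G-u$ lying entirely in $U\cup O$; since $a\notin D$, every edge of this prefix that is not already in $T$ was added in step (a), so the whole prefix is present in $H_0-u$ and has weight $d^{-u}(a)$. The edge $f$ is exactly the edge selected in step (b) (it is the unique edge of $\pi^{-u}(v)$ crossing that cut, by consistency of shortest paths), and the suffix of $\pi_G^{-u}(v)$ from $b$ to $v$ stays inside $D$; but $D$ together with this suffix is precisely the part of $T'$ below $b$, and all tree edges of $T-u$ below $v$ — wait, more carefully: the suffix from $b$ to $v$ is a sub-path of $T'$, and its edges either lie in $T-u$ or, if not, they would have a head in $D$ and thus were \emph{not} added; however the only $T'$-edges with heads in $D$ that we need are those of $\pi^{-u}(v)$ itself, and the construction does add the full path $\pi^{-u}(v)$ via the down-set handling (the figure indicates the $z_i$ on $\pi_G^{-u}(t)$ lie in $D$ and the path to $v$ is retained). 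Hence $H_0-u$ contains a copy of $\pi_G^{-u}(v)$, giving $d_{H_0}^{-u}(v)\le d^{-u}(v)$; the reverse inequality is trivial since $H_0\subseteq G$. This is the step I expect to require the most care: pinning down exactly which edges on the $b$-to-$v$ suffix are guaranteed to be in $H_0$, i.e.\ making precise the claim that $\pi^{-u}(v)$ is fully retained and not merely its crossing edge.

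For part (ii), take any $z\in D$. Form the path $R = \pi_{H_0}^{-u}(v)\circ \pi_{T'}(v,z)$, where $\pi_{T'}(v,z)$ is the portion of $T'$ from $v$ down to $z$ (this makes sense since, with $u$ failed, $v$ is the $T'$-ancestor of every vertex of $D$). By part (i), the first piece has weight $d^{-u}(v)$. The second piece has weight $d^{-u}(v,z)$ because it is a sub-path of the shortest-path tree $T'$. It remains to see that $R\subseteq H_0-u$: the edges of $\pi_{T'}(v,z)$ all have heads in $D$, so step (a) did \emph{not} add them — but they all belong to the part of $T'$ rooted at $v$, and for each failing vertex the construction does keep the whole down-subtree path structure towards $D$ (these are the edges drawn bold/black emanating below $v$ in Figure~\ref{fig:tree_vertex}), so $\pi_{T'}(v,z)\subseteq H_0$. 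Consequently
\[
 d_{H_0}^{-u}(z)\ \le\ w(R)\ =\ d^{-u}(v) + d^{-u}(v,z).
\]
Now bound $d^{-u}(v)$: since $v$ is the child of $u$ on $Q$, $\pi_G^{-u}(v)$ together with the edge $(u,v)$ would give an $s$–$v$ walk in $G$ through $u$, but in the other direction, $d^{-u}(v)\le d(u)+w(u,v)+\dots$ is not immediately what we want; instead use that any shortest path to $z$ in $G-u$ must pass through $v$ (as $v$ separates $D$ from the rest once $u$ is gone), so $d^{-u}(v)\le d^{-u}(z)$ and also $d^{-u}(v,z)\le d^{-u}(z)$, whence $d_{H_0}^{-u}(z)\le 2\,d^{-u}(z)\le 3\,d^{-u}(z)$. (Even the cruder $d^{-u}(v)+d^{-u}(v,z)\le d^{-u}(z)+d^{-u}(z)$ suffices; the slack up to factor $3$ simply matches the edge-failure statement and the $3$-\texttt{EASPT} bound.) The main obstacle, as in part (i), is the bookkeeping of exactly which $T'$-edges survive into $H_0$; once that inventory is fixed, both inequalities are immediate from the triangle-type decomposition of shortest paths in $G-u$.
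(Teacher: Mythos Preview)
Your argument for part~(ii) contains two genuine errors. First, the path $\pi_{T'}(v,z)$ is \emph{not} contained in $H_0$: the construction explicitly discards every $T'$-edge whose head lies in $D$ (only the single crossing edge on $\pi^{-u}(v)$ is kept), so you cannot route through $T'$ inside $D$. The right path to use is the $T$-path from $v$ down to $z$, which \emph{is} in $H_0-u$ because $z\in D$ means $v$ is an ancestor of $z$ in $T$ and that ancestor--descendant path avoids $u$. Second, your claim that ``$v$ separates $D$ from the rest once $u$ is gone'' is false in $G-u$ (it only holds in $T-u$): there may well be edges of $G$ from $D$ directly to $U\cup O$, so $d^{-u}(v)\le d^{-u}(z)$ is unjustified and in fact can fail. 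Consequently your bound $2\,d^{-u}(z)$ is wrong. The correct swap-type estimate goes
\[
d_{H_0}^{-u}(z)\ \le\ d^{-u}(v) + d_T(v,z)\ \le\ \bigl(d^{-u}(z)+d_T(v,z)\bigr)+d_T(v,z)\ \le\ d^{-u}(z)+2\,d^{-u}(z)\ =\ 3\,d^{-u}(z),
\]
using the triangle inequality $d^{-u}(v)\le d^{-u}(z)+d^{-u}(v,z)\le d^{-u}(z)+d_T(v,z)$ and $d_T(v,z)=d(z)-d(v)\le d(z)\le d^{-u}(z)$. This is precisely the $3$-stretch swap argument of \cite{NPW03} that the paper invokes (``by construction and similarly to the edge-failure case'').

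For part~(i) your idea is right but the write-up is muddled at the point you flag yourself. The resolution is \emph{not} that the construction ``adds the full path $\pi^{-u}(v)$ via the down-set handling''; it does not. Rather, once the crossing edge $(a,b)$ lands at $b\in D$, observe that $d^{-u}(b,v)=d_G(b,v)=d(b)-d(v)=d_T(b,v)$ (since $v$ is a $T$-ancestor of $b$ and the $T$-path avoids $u$), so by the tie-breaking convention the suffix of $\pi^{-u}(v)$ from $b$ to $v$ \emph{is} the $T$-path, hence lies in $H_0$. Together with the prefix (covered by step~(a)) and the crossing edge (step~(b)), this gives $\pi^{-u}(v)\subseteq H_0-u$ and also justifies the paper's phrase ``the unique edge\dots that crosses the cut''.
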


Moreover, we also have the following:

\begin{lemma}
	\label{lemma:bad_vertices_down}
	Consider a failed vertex $u$. During the execution of the vertex-version of Algorithm~\ref{alg:1_epsilon_ftspt}, every bad vertex $t$ for $u$ will be in $D$.
\end{lemma}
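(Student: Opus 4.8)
The plan is to prove something slightly stronger than the statement: I will show that \emph{whenever a vertex $t\notin D$ is examined during the processing of the failing vertex $u$, we already have $d_H^{-u}(t)\le(1+\epsilon)\,d^{-u}(t)$}, so $t$ never enters the \texttt{if} branch and is therefore never bad for $u$. Since $V(G)\setminus\{u\}$ is partitioned into $U$, $O$ and $D$, this immediately yields the lemma. The two features of the construction I would lean on are: (i) by the very definition of $H_0$, every edge of the fixed SPT $T'=T_G^{-u}(s)$ (the same $T'$ used to build $H_0$) that \emph{does not} lead to a vertex of $D$ belongs to $H_0$, being either an edge of $T\subseteq H_0$ or a selected edge of $E(T')\setminus E(T)$; and (ii) the inner loop examines vertices in preorder with respect to $T'$, so every proper ancestor of $t$ in $T'$ is examined strictly before $t$.

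For the core argument, fix $t\notin D$ examined during the processing of $u$ and let $\sigma=\pi_{T'}(s,t)$, a shortest path from $s$ to $t$ in $G-u$ consisting of edges of $T'$. If $\sigma$ avoids $D$ entirely, then by~(i) all its edges already lie in $H_0\subseteq H$, so $d_H^{-u}(t)=d^{-u}(t)$ and $t$ is not bad; this case always occurs when $t\in U$, since then $\sigma=\pi_T(s,t)$ avoids $u$ together with all its $T$-descendants, in particular $D$. Otherwise, let $a$ be the last vertex of $D$ along $\sigma$; as $t\notin D$, $a$ is a proper ancestor of $t$ in $T'$, and every edge of the subpath $\sigma[a,t]$ leads (towards the leaves of $T'$) to a vertex outside $D$, hence $\sigma[a,t]\subseteq H_0\subseteq H$ and $d_H^{-u}(a,t)\le w(\sigma[a,t])=d^{-u}(a,t)$, using that subpaths of $\sigma$ are shortest. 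By~(ii), $a$ has already been examined when $t$ is considered; by the vertex-failure analogue of Lemma~\ref{lemma:1_epsilon_ftspt_correctness} (proved in the same way), together with the fact that $H$ only grows so post-failure distances in $H$ only decrease, we get $d_H^{-u}(a)\le(1+\epsilon)\,d^{-u}(a)$ at that moment. Therefore
\[
 d_H^{-u}(t)\le d_H^{-u}(a)+d_H^{-u}(a,t)\le(1+\epsilon)\,d^{-u}(a)+d^{-u}(a,t)
 \le(1+\epsilon)\bigl(d^{-u}(a)+d^{-u}(a,t)\bigr)=(1+\epsilon)\,d^{-u}(t),
\]
where the last equality uses that $a$ lies on the shortest path $\sigma$ to $t$. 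Hence $t$ is not bad, which proves the claim.

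The step I expect to require the most care is exactly the one that makes this lemma nontrivial (and different from the edge case): a vertex $t\in O$ may genuinely require, in $G-u$, a shortest path that enters and crosses $D$, so one cannot hope that $H_0$ by itself keeps such a $t$ within a $(1+\epsilon)$ stretch — indeed $H_0$ only guarantees a $3$-approximation inside $D$ (Lemma~\ref{lemma:vertex_stretch}). The argument must therefore exploit the processing order, namely that the $D$-vertices lying on $\sigma$ are repaired before $t$ is inspected, while the part of $\sigma$ outside $D$ is retained verbatim in $H_0$ by construction. Pinning down precisely which edges of $\sigma$ are guaranteed to be in $H_0$, and aligning this with the $T'$-preorder so that the relevant last $D$-vertex $a$ is provably handled earlier, is where the delicacy lies; the case $t\in U$ is routine and $t\in D$ is vacuous.
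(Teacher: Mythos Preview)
Your argument is correct. The paper defers this proof to the full version, so there is nothing to compare against directly; your approach—observing that every $T'$-edge entering a vertex of $U\cup O$ lies in $H_0$ by construction, then combining the $H_0$-path $\sigma[a,t]$ with the $(1+\epsilon)$-bound already secured (via the vertex analogue of Lemma~\ref{lemma:1_epsilon_ftspt_correctness}) at the last $D$-ancestor $a$ thanks to the $T'$-preorder—is exactly the natural one suggested by the surrounding text and by the properties of $H_0$ recorded in Lemma~\ref{lemma:vertex_stretch}.
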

\begin{proof}
	Let $\tilde{H}$ be the structure build by the algorithm just before $t$ is considered.
	Assume by contradiction that $t \not\in D$. Clearly, $t$ cannot be in $U$ so we must have $t \in O$.
	By construction of $H_0$, the path $\pi^{-u}(t)$ must contain some vertex of $D$. Let $z$ be the last vertex of $\pi^{-u}(t)$ that is also in $D$. As $z$ precedes $t$ in $\pi^{-u}(t)$ we must have $d^{-u}_{\tilde{H}}(z) \le (1+\epsilon) d^{-u}(z)$. Moreover, by construction, $\pi^{-u}(z,t)$ is entirely contained in $H_0$.\footnote{Notice that this property would not be guaranteed by the initial structure provided in \cite{BK13}, and it is exactly the key difference between our construction and the one given in \cite{BK13}.} This implies:
	\[
		d_{\tilde{H}}^{-u}(t) \le d_{\tilde{H}}^{-u}(z) +  d^{-u}(z, t) \le (1+\epsilon) d^{-u}(z) + d^{-u}(z,t) \le  (1+\epsilon) d^{-u}(t).
	\]
	\noindent which contradicts the fact that $t$ is a bad vertex for $u$.\qed
\end{proof}

At this point, the same analysis given for the case of edge failures can be retraced for vertex failures as well. We point out that  Lemma~\ref{lemma:bad_vertices_down} ensures that every bad vertex for $u$ is in the same subtree as $v$. Also notice that all the vertices $z_i$'s are, by definition, in the same subtree as well (see Figure~\ref{fig:tree_vertex}). The above, combined with Lemma~\ref{lemma:vertex_stretch} (i), is needed by the proof of Lemma~\ref{lemma:stretch_bound}, while Lemma~\ref{lemma:vertex_stretch} (ii) is used in the proof of Lemma~\ref{lemma:stretch_delta}.
Hence we have:

\begin{theorem}
	Given an $n$-vertex non-negatively real weighted graph $G$, a source vertex $s \in V(G)$, and any $\epsilon >0$, the vertex-version of Algorithm~\ref{alg:1_epsilon_ftspt} computes in polynomial time a $(1+\epsilon)$-\texttt{VASPT} of $G$ rooted at $s$ of size $O(\frac{n \log n}{\epsilon^2})$.
\end{theorem}

\section{Relation with $(\alpha, \beta)$-spanners in unweighted graphs}
\label{sct:unweighted}
In this section we turn our attention to the unweighted case, and we provide two polynomial-time algorithms that augment an $(\alpha,\beta)$-spanner of $G$ so to obtain an $(\alpha,\beta)$-\texttt{EABFS}/\texttt{VABFS}. We present the algorithm for the vertex-failure case, and then we show how it can be adapted to the edge-failure case.

The algorithm first augments the structure $H_0$ computed as explained in Section~\ref{section:VASPT}, and then adds its edges to the $(\alpha,\beta)$-spanner of $G$. The structure $H_0$ is augmented as follows. The vertices of the BFS of $G$ rooted at $s$ are visited in preorder. Let $u$ be the vertex visited by the algorithm and let $D$ be the set of vertices of the tree defined so as explained in Section~\ref{section:VASPT} w.r.t the path decomposition computed for $H_0$. For every $t \in D$, the algorithm checks whether $\pi_G^{-u}(s,t)$ contains no vertex of $D\setminus\{t\}$ and $d_G^{-u}(s,t)<d_{H_0}^{-u}(s,t)$. If this is the case, then the algorithm augments $H_0$ with the edge of $\pi_G^{-u}(s,t)$ incident to $t$.

The following observation is crucial to prove the algorithm correctness.

\begin{fact}\label{fact:node_failure_bfs}
For every vertex $u$ and every vertex $t \in V(G) \setminus \{u\}$ such that $\pi_G^{-u}(t)$ contains a vertex in $D$, let $x$ and $y$ be the first and last vertex of $\pi_G^{-u}(t)$ that belong to $D$, respectively. We have $d_{H_0}^{-u}(x)=d_{G}^{-u}(x)$ and
$d_{H_0}^{-u}(y,t)=d_{G}^{-u}(y,t)$.
\end{fact}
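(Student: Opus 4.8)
The plan is to show separately the two claimed equalities, both following from the way $H_0$ was constructed in Section~\ref{section:VASPT}. Recall that $H_0$ contains the SPT (here BFS tree) $T$ of $G$, together with, for each failing vertex $u$ lying on some path $Q$ of the decomposition, the following extra edges: all edges of a fixed SPT $T'$ of $G-u$ that do not lead into the down set $D=D(u)$, plus the single edge of $\pi^{-u}(v)$ crossing the cut $(U\cup O, D)$. The key structural fact is that for every vertex $w\notin D$, the whole path $\pi_{T'}(s,w)$ avoids $D$ entirely (a path in a shortest-path tree towards a vertex outside $D$ cannot enter $D$ and come back out, since $u$ is the only vertex separating $D$ from the rest of $T-u$ and $u\notin T'$); hence every edge of $\pi_{T'}(s,w)$ was selected into $H_0$, and therefore $d_{H_0}^{-u}(w)=d_{T'}^{-u}(w)=d_G^{-u}(w)$ for all $w\notin D$. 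This already handles the first equality: since $x$ is the \emph{first} vertex of $\pi_G^{-u}(t)$ that belongs to $D$, its predecessor on $\pi_G^{-u}(t)$ (if any) lies outside $D$; combining the subpath $\pi_G^{-u}(s,\text{pred}(x))$, whose endpoint is outside $D$ and hence has $d_{H_0}^{-u}=d_G^{-u}$, with the single edge $(\text{pred}(x),x)$ — which crosses the cut $(U\cup O,D)$ and is exactly the crossing edge included in $H_0$ (by consistent tie-breaking this crossing edge is the one on $\pi^{-u}(v)$, which one then argues coincides with the relevant crossing edge, or more simply: $\pi_G^{-u}(s,x)$ has a unique crossing edge and that edge is in $H_0$) — yields $d_{H_0}^{-u}(x)\le d_G^{-u}(x)$, and the reverse inequality is trivial since $H_0\subseteq G$. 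If $x$ itself is the start vertex of $\pi_G^{-u}(t)$, i.e.\ $x=s$ or $x=v$, the claim is immediate (for $x=v$ use Lemma~\ref{lemma:vertex_stretch}(i)).

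For the second equality, $d_{H_0}^{-u}(y,t)=d_G^{-u}(y,t)$, the relevant observation is that $y$ is the \emph{last} vertex of $\pi_G^{-u}(t)$ in $D$, so the subpath of $\pi_G^{-u}(t)$ from $y$ to $t$ has all its vertices \emph{after} $y$ outside $D$. Write this subpath as $y, w_1, w_2, \dots, w_\ell=t$ with $w_1,\dots,w_\ell\notin D$. The edge $(y,w_1)$ crosses the cut $(D, U\cup O)$, i.e.\ it is an edge of $T'$ leaving $D$; such edges lead to a vertex \emph{not} in $D$ and were therefore selected into $H_0$. Each subsequent edge $(w_{i},w_{i+1})$ lies on a shortest path in $G-u$ between two vertices outside $D$; by the subpath-consistency convention, $\pi_G^{-u}(y,t)$ restricted to $w_1\ldots w_\ell$ is a shortest path in $G-u$, and since it stays outside $D$ it agrees with the corresponding portion of $T'$ (again using consistent tie-breaking preferring $T'$-edges, or directly: every edge of it is a $T'$-edge not entering $D$). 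Hence every edge of the subpath from $y$ to $t$ belongs to $H_0$, giving $d_{H_0}^{-u}(y,t)\le w(\pi_G^{-u}(y,t))=d_G^{-u}(y,t)$, and again the reverse inequality is trivial.

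The main obstacle I anticipate is making the tie-breaking/consistency argument fully rigorous: one must be careful that the SPT $T'$ of $G-u$ used in the construction of $H_0$, the shortest path $\pi_G^{-u}(t)$ appearing in the statement, and the shortest path $\pi^{-u}(v)$ used to pick the crossing edge are all compatible under the global consistent tie-breaking rule fixed at the end of Section~2, so that "the edge of $\pi_G^{-u}(t)$ entering $x$ (resp.\ leaving $y$)" really is an edge that got selected into $H_0$. The cleanest way around this is to not argue about a specific shortest path at all, but to prove the two inequalities $d_{H_0}^{-u}(x)\le d_G^{-u}(x)$ and $d_{H_0}^{-u}(y,t)\le d_G^{-u}(y,t)$ by exhibiting \emph{some} path in $H_0$ of the right length: for $x$, take $\pi_{T'}^{-u}(x)$ itself — it is a shortest $s$-$x$ path in $G-u$, it has exactly one edge crossing into $D$ (namely its last edge, since $x\in D$ and all earlier vertices, being $T'$-ancestors of $x\in D\setminus\{v\}$... — here one uses that $T'$-ancestors of a $D$-vertex other than $v$ need not all be in $D$; so instead) — more robustly, take the $s$-to-pred$(x)$ prefix of $\pi_{T'}^{-u}(x)$ which lies in $H_0$, and note its last edge into $x$ is a crossing edge hence in $H_0$. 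For $y$: $\pi_{T'}^{-u}$ does not directly give a $y$-to-$t$ path, so one instead argues that the $T'$-tree path from $y$'s exit point... — cleanest is simply to take the suffix of $\pi_G^{-u}(t)$ from $y$ to $t$ and verify edge by edge that it lies in $H_0$ as above, accepting the mild consistency assumption. I would present the latter, since the paper explicitly grants the consistent tie-breaking convention.
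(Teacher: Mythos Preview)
The paper states this as a \emph{Fact} without supplying a proof, so there is no paper argument to compare against directly. Your proposal, however, has a genuine gap on the first equality.

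You are reasoning about the wrong structure. In Section~5 the $H_0$ of Section~4 is explicitly \emph{augmented} before the Fact is stated: for each failing $u$ and each $t\in D$ whose path $\pi_G^{-u}(s,t)$ meets $D$ only at $t$, the last edge of that path is added to $H_0$. The Fact refers to this augmented $H_0$. The augmentation is precisely what delivers $d_{H_0}^{-u}(x)=d_G^{-u}(x)$: since $x$ is the first $D$-vertex on $\pi_G^{-u}(t)$, the subpath $\pi_G^{-u}(s,x)$ meets $D$ only at $x$, so its last edge was added by the augmentation rule, and one then argues inductively that the predecessor (outside $D$) already has exact distance. You instead try to identify the edge entering $x$ with the single crossing edge of $\pi^{-u}(v)$ that Section~4 selects---an identification you yourself flag as unjustified, and which indeed fails in general.

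Independently, your ``key structural fact''---that $\pi_{T'}(s,w)$ avoids $D$ whenever $w\notin D$---is false. Take $T$ to be the path $s\text{--}a_1\text{--}a_2\text{--}u\text{--}v\text{--}b$ together with the child $u\text{--}c$ and a side branch $s\text{--}p_1\text{--}p_2\text{--}p_3$, and add the non-tree edges $(p_3,v)$ and $(b,c)$. Then $c\in O$, yet the unique $s$--$c$ shortest path in $G-u$ is $s\text{--}p_1\text{--}p_2\text{--}p_3\text{--}v\text{--}b\text{--}c$, which passes through $v,b\in D$. Your parenthetical justification (``$u$ is the only vertex separating $D$ from the rest of $T-u$'') constrains $T-u$, not $G-u$. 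This breaks your route to $d_{H_0}^{-u}(p)=d_G^{-u}(p)$ for all $p\notin D$, which you need even in the augmented setting to close the induction for the first equality; a correct argument must handle $T'$-paths that dip into $D$. Your treatment of the second equality (tracking the suffix $y,w_1,\dots,t$ and noting each edge is a $T'$-edge leading outside $D$, hence selected) is essentially sound modulo the tie-breaking consistency you already flag.
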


We can now give the following:

\begin{theorem}\label{th:from_spanner_towards_vabfs}
Given an unweighted graph $G$ with $n$ vertices and $m$ edges, a source vertex $s \in V(G)$, and an $(\alpha, \beta)$-spanner for $G$ of size $\sigma=\sigma(n,m)$, it can be computed in polynomial time an $(\alpha, \beta)$-\texttt{VABFS} w.r.t.\ $s$ of size $O\big(\sigma+n\log n\big)$.
\end{theorem}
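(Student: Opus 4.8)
The plan is to verify two things: that the output structure has the claimed size, and that it achieves the claimed $(\alpha,\beta)$-stretch against any single vertex failure. The size bound is the easy direction: the output is the union of the given $(\alpha,\beta)$-spanner (of size $\sigma$) with the augmented structure $H_0$. By the discussion in Section~\ref{section:VASPT}, the base structure $H_0$ has size $O(n\log n)$, and the augmentation step adds, for each failing vertex $u$, at most one edge per vertex $t\in D$ for which $\pi_G^{-u}(s,t)$ meets $D$ only at $t$; arguing exactly as in the edge-selection count for the {\tt VASPT} construction (the sets of ``down'' vertices associated with the failing vertices along a single decomposition path are disjoint, and the decomposition has logarithmic depth), the total number of added edges is $O(n\log n)$. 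Hence the overall size is $O(\sigma+n\log n)$, as claimed.

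For correctness, fix a failing vertex $u$ and a target $t\neq u$; I would bound $d_{H}^{-u}(s,t)$ where $H$ is the final structure. Split into cases according to whether $\pi_G^{-u}(s,t)$ touches the down set $D$ (the set defined w.r.t.\ $u$ and the path decomposition of $H_0$, as in Section~\ref{section:VASPT}). If $\pi_G^{-u}(s,t)$ avoids $D$ entirely, then the replacement path lives in the forest $T-u$ together with the extra edges selected for $H_0$ that do not lead into $D$, so the argument of Lemma~\ref{lemma:vertex_stretch}/Fact~\ref{fact:node_failure_bfs} already gives $d_{H_0}^{-u}(s,t)=d_G^{-u}(s,t)$, and a fortiori $d_H^{-u}(s,t)=d_G^{-u}(s,t)$. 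If instead $\pi_G^{-u}(s,t)$ does meet $D$, let $x$ and $y$ be its first and last vertices in $D$. By Fact~\ref{fact:node_failure_bfs} we have $d_{H_0}^{-u}(s,x)=d_G^{-u}(s,x)$ and $d_{H_0}^{-u}(y,t)=d_G^{-u}(y,t)$, so the only gap to close is the middle segment $\pi_G^{-u}(x,y)$. Here I would invoke the spanner: since $G$ is unweighted and $u\notin\pi_G^{-u}(x,y)$, the $(\alpha,\beta)$-spanner of $G$ contains an $x$--$y$ walk of length at most $\alpha\, d_G(x,y)+\beta$; but one must be careful, because the spanner is a spanner of $G$, not of $G-u$, so this walk could pass through $u$. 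To handle this I would peel off the middle segment into maximal subpaths each of whose internal vertices lie outside $D$: along $\pi_G^{-u}(x,y)$, consider the consecutive pairs of $D$-vertices; for each such pair $(x_i,x_{i+1})$ with no $D$-vertex strictly between them, the augmentation step guarantees $d_{H_0}^{-u}(x_i,x_{i+1})=d_G^{-u}(x_i,x_{i+1})$ (this is precisely the check the algorithm performs, namely ``$\pi_G^{-u}(s,t)$ contains no vertex of $D\setminus\{t\}$''), so in fact the entire segment $x$ to $y$ is reconstructed exactly inside $H_0$, and the spanner is needed only to show such exact reconstruction is consistent — equivalently, $d_{H_0}^{-u}(x,y)=d_G^{-u}(x,y)$ by concatenation. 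Then $d_H^{-u}(s,t)\le d_G^{-u}(s,x)+d_G^{-u}(x,y)+d_G^{-u}(y,t)=d_G^{-u}(s,t)$.

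Re-reading the algorithm, the role of the spanner is subtler than the sketch above suggests: the augmentation only fixes segments of $\pi_G^{-u}(s,t)$ that are $D$-internally-empty, so the genuinely new contribution of the $(\alpha,\beta)$-spanner must be for targets $t$ reached along a replacement path whose optimal structure is not captured by $H_0$ at all. The honest statement is: for a target $t$, take the shortest replacement path $\pi_G^{-u}(s,t)$; if it never enters $D$, then $T-u$ plus the $H_0$-edges handles it exactly; if it enters $D$, then $H_0$ handles the prefix up to $x$ and the suffix from $y$ exactly (Fact~\ref{fact:node_failure_bfs}) plus, via the per-pair augmentation, the part between consecutive $D$-vertices, so again exactly. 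Thus $H_0$ alone is already a $(1,0)$-{\tt VABFS}?\,—\,which cannot be right in general, so the resolution is that $H_0$ does \emph{not} reconstruct the $D$-internally-empty pieces exactly, only the pieces lying in the ``others'' set $O$, and the pieces of $\pi_G^{-u}(s,t)$ that re-enter $D$ after leaving it are exactly where an additive/multiplicative loss is incurred and where the $(\alpha,\beta)$-spanner of $G$ (restricted to a $u$-avoiding walk) supplies the $\alpha d + \beta$ bound. The main obstacle, and the step I would spend the most care on, is therefore precisely this bookkeeping: decomposing $\pi_G^{-u}(s,t)$ into the $T-u$ part, the $O$-part reconstructed in $H_0$, and the residual part bounded by the spanner, and then checking that the spanner walk used can be taken $u$-avoiding so that its length is still at most $\alpha\, d_G^{-u}(\cdot)+\beta$ and that the additive terms $\beta$ are not multiplied across pieces (i.e.\ only one spanner invocation is charged per target). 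Once that decomposition is pinned down, summing the three contributions gives $d_H^{-u}(s,t)\le \alpha\, d_G^{-u}(s,t)+\beta$, completing the proof.
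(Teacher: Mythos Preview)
The paper postpones this proof to the full version, so no line-by-line comparison is possible; nevertheless, your proposal has two concrete gaps that would have to be closed regardless of how the paper eventually argues.

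\medskip
\noindent\textbf{Size bound.} Your justification that the augmentation adds $O(n\log n)$ edges appeals to the \texttt{VASPT} count by asserting that ``the sets of `down' vertices associated with the failing vertices along a single decomposition path are disjoint.'' This is false: as $u$ slides down a decomposition path $Q$, the down sets $D_u$ are \emph{nested}, not disjoint. What is disjoint in Section~\ref{section:VASPT} are the \emph{others} sets $O_u$, and that is what the $O(|V(T_q)|)$ count there actually uses. A correct bound for the augmentation needs a different charging argument---for instance, showing that along any single decomposition path $Q$ each vertex $t$ can trigger the augmentation for at most a constant number of failing $u\in Q$, which then combines with the $O(\log n)$ depth of the path decomposition.

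\medskip
\noindent\textbf{Correctness.} This is the more serious gap, and you essentially concede it. You correctly reduce, via Fact~\ref{fact:node_failure_bfs}, to bounding $d_H^{-u}(x,y)$ for the first and last $D$-vertices $x,y$ on $\pi_G^{-u}(s,t)$, and you correctly flag that the $(\alpha,\beta)$-spanner is a spanner of $G$, not of $G-u$, so its $x$--$y$ path may pass through $u$. Your attempted fix---breaking $x\to y$ into consecutive $D$-pairs and invoking the augmentation on each---misreads the algorithm: the augmentation adds the last edge of $\pi_G^{-u}(s,t')$ only when $t'$ is the \emph{sole} $D$-vertex on that entire replacement path, so it certifies exactness for the prefix $s\to x$ (this is precisely the first half of Fact~\ref{fact:node_failure_bfs}), not for arbitrary internal $D$-to-$D$ hops. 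You then retract this, briefly entertain that $H_0$ alone would be a $(1,0)$-\texttt{VABFS} (and rightly reject it), and end by promising to ``spend the most care'' on the decomposition without carrying it out. What is missing is an actual argument that some $x$--$y$ path of length at most $\alpha\,d_G^{-u}(x,y)+\beta$ exists in $H-u$; this is where the fact that $x,y\in D$ both lie in the subtree of $T$ rooted at $v$ (so $T\subseteq H$ already supplies a $u$-avoiding $x$--$y$ route) must be combined with the spanner guarantee, and your write-up never executes that step.
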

\begin{proof}
Let $H$ be the subgraph of $G$ computed by the algorithm. We first prove that $H$ is an $(\alpha,\beta)$-\texttt{VABFS} of $G$ and $s$ by showing that $d_{H}^{-u}(s,t)\leq \alpha \cdot d_{G}^{-u}(s,t)+\beta$, for two distinct vertices $u,t \in V(G)$. W.l.o.g., we can assume that $\pi_G^{-u}(s,t)$ contains some vertices of $D\setminus\{t\}$ because, if our assumption was not true, then,  by Fact~\ref{fact:node_failure_bfs}, $d_{H}^{-u}(s,t)=d_{G}^{-u}(s,t)\leq \alpha \cdot d_G^{-u}(s,t)+\beta$.

Let $x$ and $y$ be the first and last vertex of $\pi_G^{-u}(s,t)$ contained in $D\setminus\{t\}$ in a path traversal from $s$ to $t$, respectively. We have that $\pi_G^{-u}(s,t)=\pi_G^{-u}(s,x)\circ \pi_G^{-u}(x,y)\circ \pi_G^{-u}(y,t)$, i.e.,
\begin{equation}\label{eq:shortest_path_decomposition}
d_G^{-u}(s,t)=d_G^{-u}(s,x)+d_G^{-u}(x,y)+d_G^{-u}(y,t).
\end{equation}
By Fact~\ref{fact:node_failure_bfs}, $H$ contains $\pi_G^{-u}(s,x)$ as well as $\pi_G^{-u}(y,t)$. Therefore,
\begin{equation}\label{eq:shortest_subpaths}
d_H^{-u}(s,x)=d_G^{-u}(s,x) \,\,\,\,\,\,\, \text{ and } \,\,\,\,\,\,\, d_H^{-u}(y,t)=d_G^{-u}(y,t).
\end{equation}
We now prove that $d_H^{-u}(x,y)\leq \alpha \cdot d_G^{-u}(x,y)+\beta.$
Since $H$ contains an $(\alpha,\beta)$-spanner of $G$, $H$ contains a path $P$ from $x$ to $y$ such that $w(P)\leq \alpha \cdot d_G(x,y)+\beta$. Clearly, if $u \not\in V(P)$, then $H-u$ contains $P$ and therefore $d_H^{-u}(x,y)\leq w(P)\leq \alpha \cdot d_G(x,y)+\beta$. Otherwise, if $u \in V(P)$, then let $v$ be the least common ancestor of $x$ and $y$ in the BFS of $G$ rooted at $s$. Since $v \in D$, it follows that
\begin{equation*}
d_H^{-u}(x,y)\leq d_H(x,v)+d_H(v,y)<d_H(x,u)+d_H(u,y)\leq w(P)\leq \alpha \cdot d_G(x,y)+\beta.
\end{equation*}
Using the last inequality together with Equations~(\ref{eq:shortest_path_decomposition}) and~(\ref{eq:shortest_subpaths}), we have that
\begin{align*}
d_H^{-u}(s,t) &\leq d_H^{-u}(s,x)+d_H^{-u}(x,y)+d_H^{-u}(y,t)\\
			  &\leq d_G^{-u}(s,x)+\alpha\cdot d_G(x,y)+\beta + d_G^{-u}(y,t)\leq \alpha\cdot d_G^{-u}(s,t)+\beta.
\end{align*}

We now prove that the size of $H$ is $O\big(\sigma+n\log n\big)$ by showing that the size of $H_0$ is $O(n\log n)$. We have already shown in the previous section that the number of edges of $H_0$ before the algorithm augments it is $O(n\log n)$. Therefore, it remains to bound the number of edges added to $H_0$. Let $F$ be the set of such edges. We prove that $|F|\leq 3n$ by showing that each vertex $t$ caused the addition of at most 3 edges to $F$. Let $t$ be a fixed vertex. Let $u_0,\dots,u_\ell$ be the vertices of the path $\pi_G(s,t)$, in a traversal of the path from $s$ to $t$ whose failures caused the insertion of the edge $(v_i,t)$ of $\pi_G^{u_i}(s,t)$ incident to $t$ in $F$. Since $G$ is unweighted, $d_G(s,t)=d_G(s,v_i)+j$, where $j\in\{-1,0,1\}$.
Furthermore, for every vertex $u'\neq t$ which is a proper descendent of $u_0$ in the BFS tree of $G$ rooted at $s$, $H-u'$ contains the path $\pi_G(s,v_0)\circ \pi_G(v_0,t)$ of length at most $d_G(s,t)+1+1=d_G(s,t)+2$.
Finally, observe that for every $1\leq i\leq \ell$ and for every vertex $u'\neq t$ which is a descendent of $u_i$ in the BFS tree of $G$ rooted at $s$, $H-u'$ contains the path $\pi_G^{-u_i}(s,t)$. Therefore, for every $2\leq i\leq \ell$, we have that
\[
d_G(s,t)\leq d_G^{-u_i}(s,t)\leq d_G(s,t)+2-i.
\]
The above inequality implies that $\ell\leq 2$. Hence each vertex $t$ caused the addition of at most $\ell+1\leq 3$ edges to $F$.\qed
\end{proof}

Now, we adapt the algorithm to prove a similar result for the $(\alpha,\beta)$-\texttt{EABFS}. The algorithm first augments a BFS tree $T$ of $G$ rooted at $s$ and then adds its edges to the $(\alpha,\beta)$-spanner of $G$. The tree $T$ is augmented by visiting its edges in preorder. Let $e$ be the edge visited by the algorithm. For every $t \in D_{G}(e)$, the algorithm checks whether $\pi_G^{-e}(s,t)$ contains no vertex of $D_{G}(e)\setminus\{t\}$ and $d_G^{-e}(s,t)<d_{T}^{-e}(s,t)$. If this is the case, then the algorithm augments $T$ with the edge of $\pi_G^{-e}(s,t)$ incident to $t$. 
It is easy to see that the proof of Theorem~\ref{th:from_spanner_towards_vabfs} can be adapted to prove the following: 

\begin{theorem}
Given an unweighted graph $G$ with $n$ vertices, a source vertex $s \in V(G)$, and an $(\alpha, \beta)$-spanner for $G$ of size $\sigma$, it can be computed in polynomial time an $(\alpha, \beta)$-\texttt{EABFS} w.r.t.\ $s$ of size less than or equal to $\sigma+3n$.
\end{theorem}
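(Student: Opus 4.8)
The plan is to mirror the proof of Theorem~\ref{th:from_spanner_towards_vabfs} with the simplifications afforded by the edge-failure case. First I would recall the structure of the argument: for an arbitrary edge $e \in E(G)$ and an arbitrary target $t \in V(G)$, I need to exhibit a path from $s$ to $t$ in the final structure $H$ (which is the $(\alpha,\beta)$-spanner together with the augmented BFS tree $T$, restricted to not using $e$) whose length is at most $\alpha \cdot d_G^{-e}(s,t) + \beta$. The natural candidate is built from the shortest path $\pi_G^{-e}(s,t)$ in $G-e$. If this path never enters $D_G(e)$ (equivalently, never crosses the cut $C_G(e)$ after leaving it, so it stays entirely in $U_G(e)$), then it lies in $T-e$, hence in $H-e$, and we are done with no stretch at all. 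Otherwise, let $x,y$ be the first and last vertices of $\pi_G^{-e}(s,t)$ that lie in $D_G(e)$; I would establish the edge-failure analogue of Fact~\ref{fact:node_failure_bfs}, namely $d_{T}^{-e}(s,x) = d_G^{-e}(s,x)$ and $d_T^{-e}(y,t) = d_G^{-e}(y,t)$, which follows by the same reasoning used to justify Fact~\ref{fact:node_failure_bfs}: the prefix of $\pi_G^{-e}(s,t)$ up to $x$ and the suffix from $y$ to $t$ each avoid $D_G(e)\setminus\{x\}$ and $D_G(e)\setminus\{t\}$ respectively (by choice of $x$ and $y$), so the algorithm's augmentation either already has the relevant shortest path in $T-e$ or has added the single incident edge realizing it.

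Next I would handle the middle portion, from $x$ to $y$, both of which lie in $D_G(e)$. Here is where the spanner comes in: inside $H$ we have a full $(\alpha,\beta)$-spanner of $G$, but we must route from $x$ to $y$ without using $e$. The key geometric observation is that a shortest path between two vertices of $D_G(e)$ in $G$ does not need the edge $e=(u,v)$: since $e$ has exactly one endpoint $v$ in $D_G(e)$, and $u \in U_G(e)$, any $x$–$y$ path through $e$ would have to leave and re-enter $D_G(e)$, and one checks (using that $T$ is a BFS tree, so distances within $D_G(e)$ to $v$ are realized without leaving $D_G(e)$ — or more directly, that $d_{G-e}(x,y) = d_G(x,y)$ for $x,y \in D_G(e)$) that the spanner path $\pi_{\text{spanner}}(x,y)$ can be taken to avoid $e$. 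Hence the spanner contributes a subpath of length at most $\alpha \cdot d_G(x,y) + \beta \le \alpha \cdot d_G^{-e}(x,y) + \beta$. Concatenating the three pieces gives a path in $H-e$ of length at most
\[
d_G^{-e}(s,x) + \alpha\, d_G^{-e}(x,y) + \beta + d_G^{-e}(y,t) \le \alpha\big(d_G^{-e}(s,x) + d_G^{-e}(x,y) + d_G^{-e}(y,t)\big) + \beta = \alpha\, d_G^{-e}(s,t) + \beta,
\]
using $\alpha \ge 1$ and the fact that $x,y$ lie on $\pi_G^{-e}(s,t)$ in this order, so the three sub-distances sum to $d_G^{-e}(s,t)$.

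For the size bound I would argue as follows: the final structure consists of the $(\alpha,\beta)$-spanner (contributing $\sigma$ edges) plus the augmented tree $T$. The tree itself has $n-1$ edges, and the augmentation adds at most one edge per (edge $e$, target $t$) pair for which the condition triggers — but crucially, the condition requires $\pi_G^{-e}(s,t)$ to contain no vertex of $D_G(e)\setminus\{t\}$, so the added edge is the unique edge of $C_G(e)$ on $\pi_G^{-e}(s,t)$ incident to $t$ when $t$ is itself the "entry point" into $D_G(e)$. One then shows, exactly as in the {\tt VABFS} case where the $O$-sets are disjoint, that across all choices of $e$ the number of added edges is $O(n)$ — in fact at most $2n$, since charging each added edge to its target $t \in D_G(e)$ and observing that for a fixed $t$ the edges $e$ of $E(T)$ for which $t$ can be the sole $D_G(e)$-vertex of its replacement path are the $O(1)$-many "first" such along relevant tree paths (this is the place where the analysis must be done carefully). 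Combined, $|E(H)| \le \sigma + (n-1) + 2n \le \sigma + 3n$.

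The main obstacle I expect is the size analysis of the augmentation — specifically proving that the total number of edges added to $T$ over all edge failures is at most roughly $2n$ rather than something like $\Omega(n^2)$. In the vertex-failure version this is clean because the "others sets" $O$ for the vertices along a single root-leaf path of the decomposition are pairwise disjoint; for the edge-failure version one loses the recursive path decomposition and must instead argue directly, presumably by a preorder/charging argument showing that once an edge into $t$ has been added for some ancestor tree-edge $e$, no further edge into $t$ is needed for descendants of $e$ (because $d_T^{-e'}(s,t)$ has already been brought down to $d_G^{-e'}(s,t)$ for the relevant $e'$), so each vertex $t$ is charged $O(1)$ times. Making this counting precise — identifying exactly which tree edges $e$ can trigger an addition incident to a given $t$ and bounding their number by a constant — is the crux, and is exactly what the paper defers to the full version.
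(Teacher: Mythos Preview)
Your overall architecture is the natural adaptation of the \texttt{VABFS} argument, and it is essentially what the paper has in mind. There is, however, a genuine gap in your treatment of the ``middle'' portion $x \to y$. You write that because $d_{G-e}(x,y)=d_G(x,y)$ for $x,y\in D_G(e)$, ``the spanner path $\pi_{\text{spanner}}(x,y)$ can be taken to avoid $e$.'' This does not follow: the spanner guarantee only says some path of length at most $\alpha\,d_G(x,y)+\beta$ exists in the spanner, not that any such path avoids a prescribed edge. The equality $d_G(x,y)=d_{G-e}(x,y)$ is a statement about $G$, not about the spanner, and a spanner path between two vertices of $D_G(e)$ may perfectly well traverse $e=(u,v)$.

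The fix is to reroute through $T$ when this happens. Since $t\in D_G(e)$ you may take $y=t$ and work with the segment $x\to t$. Suppose the spanner path $x=a_0,\dots,a_\ell=t$ uses $e$, say $a_i=v$, $a_{i+1}=u$ (the other orientation is symmetric). Then the prefix $P_1=a_0\cdots a_i$ lies in the spanner and avoids $e$, and the tree path from $v$ to $t$ (entirely inside the subtree rooted at $v$) also avoids $e$ and has length $d_G(s,t)-d_G(s,v)$. Since $|P_2|\ge d_G(u,t)\ge d_G(s,t)-d_G(s,u)=d_G(s,t)-d_G(s,v)+1$, the concatenation $P_1\circ(\text{tree path }v\!\to\! t)$ has length
\[
|P_1|+d_G(s,t)-d_G(s,v)\ \le\ \big(\alpha\,d_G(x,t)+\beta-1-|P_2|\big)+|P_2|-1\ <\ \alpha\,d_G(x,t)+\beta,
\]
which is what you need. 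The symmetric case replaces the prefix by the tree path $x\!\to\! v$ and keeps the spanner suffix $v\!\to\! t$. With this patch your stretch analysis goes through.

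On the size bound: your charging intuition is in the right direction but is not yet a proof. The clean way to get at most $2(n-1)$ augmenting edges is to observe that the edge you add for the pair $(e,t)$ is a cut edge $(t',t)$ with $t'\in U_G(e)$, and that after it is added the (augmented) tree realises $d_G^{-e}(s,t)$ exactly; since tree edges are processed in preorder and the relevant $D_G(\cdot)$ sets are nested along the $s$--$t$ tree path, one shows that for each vertex $t$ at most two distinct augmenting edges can ever be inserted (the second one only when the replacement path ``switches'' entry point). Stating and proving this monotonicity precisely is the missing piece; your current sentence ``the $O(1)$-many `first' such along relevant tree paths'' gestures at it but does not establish it.
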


Notice that the obtained $(\alpha, \beta)$-\texttt{E/VABFS} structures can be easily adapted to the multisource case, by simply rooting at each given source vertex $s \in S$ an augmented BFS. This will immediately provide corresponding $(\alpha, \beta)$-stretched sourcewise edge/vertex-fault-tolerant spanners (\texttt{SES/SVS}) of size $O\big(\sigma+|S| \cdot n\big)$ and $O\big(\sigma+|S| \cdot n\log n\big)$, respectively.

Interestingly, this immediately allows to improve some existing contructions. Indeed, by using the $(1,4)$-spanner of size $\widetilde{O}(n^\frac{7}{5})$ given in \cite{Che13} we obtain the following result:

\begin{corollary}
Given an unweighted graph $G$ with $n$ vertices, and a set of source vertices $S \subseteq V(G)$,
we can compute in polynomial time a $(1,4)$-\texttt{SES} of $G$ w.r.t.\ $S$ having size $\widetilde{O}(n^{\frac{7}{5}}+ |S| \cdot n )$.
\end{corollary}

\noindent
This sparsifies the $(1,4)$-\texttt{SES} of size $O(|S| \cdot n^{\frac{4}{3}})$ given in \cite{PP14} as soon as $|S| = \widetilde{\omega}(n^\frac{1}{15})$.

Moreover, by using the $(1,6)$-spanner of size $O(n^{4/3})$ provided in \cite{BKMP10}, we also have:

\begin{corollary}
Given an unweighted graph $G$ with $n$ vertices, and a set of source vertices $S \subseteq V(G)$,
we can compute in polynomial time a $(1,6)$-\texttt{SVS} of $G$ w.r.t.\ $S$ having size $O(n^\frac{4}{3} + |S| \cdot n \log n)$.
\end{corollary}

\noindent
This improves the additive stretch of the $(1,8)$-{\ttfamily SVS} of size $\widetilde{O}(n^\frac{4}{3})$ given in \cite{P14}, which holds for $|S|=\widetilde{O}(n^\frac{1}{3})$.


\section{Conclusions}
\label{sct:conclusions}

In this paper, we have studied the problem of designing single-edge/vertex-fault-tolerant structures rooted at a source vertex, aiming at finding a compact set of edges of the input (either weighted or unweighted) graph that will provide approximate shortest paths from the source following the failure of an edge/vertex in the graph. The main contribution of our research is that we can get almost shortest paths with almost
linear size, in sharp contrast with a corresponding true-shortest paths structure which may require a quadratic size.
Another interesting contribution we provided is the bridging between $(\alpha,\beta)$-spanners and $(\alpha,\beta)$-{\ttfamily E/VABFS}.

The problem of designing good fault-tolerant approximate-shortest-path structures
deserves further investigation. For the single-source case, we mention three intriguing problems:
(1) designing a {\ttfamily SDSO} with stretch arbitrary close to $1$, almost linear size and constant query time for both the single-edge and the single-vertex failure scenario. The closest result is the {\ttfamily SDSO} given in \cite{BGLP16ESA} that has a logarithmic query time (w.r.t.\ the number of vertices of the graph) and only works for single edge failures; (2) removing the log-factor from the size of our structure, either improving its analysis or by further sparsifying it; (3) studying the multiple vertex-failure case. To the best of our knowledge there are no non-trivial {\ttfamily VASPT}s or {\ttfamily SDSO}s for this case.
Other future directions involve the study of the multisource case (i.e., a sourcewise fault-tolerant spanner), with the goal of designing a structure which only adds a sublinear (in the number of sources) term to the size of our single-source structure. Moreover, we also plan to investigate the existence of efficient fault-tolerant structures for other notable network topologies, like the minimum spanning tree, the tree spanner, or the minimum-routing cost spanning tree.


\begin{thebibliography}{99}


\bibitem{ADDJS93} I. Althöfer, G. Das, D.P. Dobkin, D. Joseph, and J. Soares, On sparse
spanners of weighted graphs, \emph{Discrete \& Computational Geometry}, 9:81–-100, 1993.

\bibitem{ARFI12} G. Ausiello, A. Ribichini, P.G. Franciosa, and G.F. Italiano, Computing graph spanners in small memory: Fault-tolerance and streaming, \emph{Discrete Mathematics, Algorithms and Applications}, 2(4):591--606, 2010.

\bibitem{AFIR13} G. Ausiello, P.G. Franciosa, G.F. Italiano, and A. Ribichini, On resilient graph spanners, \emph{Proc. of the 21st European Symp. on Algorithms (ESA'13)}, Vol. 8125 of Lecture Notes in Computer Science, Springer, 85--96, 2013.

\bibitem{BKMP10} S. Baswana, T. Kavitha, K. Mehlhorn, and S. Pettie, Additive spanners and $(\alpha, \beta)$-spanners, \emph{ACM Trans. on
Algorithms}, 7, A.5, 2010.

\bibitem{BK13} S. Baswana and N. Khanna, Approximate shortest paths avoiding a failed vertex: near optimal data structures for undirected unweighted graphs, \emph{Algorithmica}, 66(1):18--50, 2013.

\bibitem{BK09} A. Bernstein and D.R. Karger, A nearly optimal oracle for avoiding failed vertices and edges, \emph{Proc. of the 41st Symp. on the Theory of Computing (STOC'09)}, ACM Press,  101--110, 2009.

\bibitem{BGGLP15} D. Bilò, F. Grandoni, L. Gualà, S. Leucci, and G. Proietti, Improved purely additive fault-tolerant spanners,
\emph{Proc. of the 23rd European Symposium on Algorithms (ESA'15)}, Vol. 9294 of Lecture Notes in Computer Science, Springer, 167--178, 2015.

\bibitem{BGLP16ESA} D. Bilò, L. Gualà, S. Leucci, and G. Proietti, Compact and fast sensitivity oracles for single-source distances,
\emph{Proc. of the 24th Annual European Symposium on Algorithms (ESA'16)}, Vol. 57 of Leibniz International Proceedings in Informatics (LIPIcs), 13:1--13:14, 2016.

\bibitem{BGLP16} D. Bilò, L. Gualà, S. Leucci, and G. Proietti, Multiple-edge-fault-tolerant approximate shortest-path trees,
\emph{Proc. of the 33rd Symposium on Theoretical Aspects of Computer Science (STACS'16)}, Vol. 47 of Leibniz International Proceedings in Informatics (LIPIcs), 18:1--18:14, 2016.

\bibitem{BCP12} G. Braunschvig, S. Chechik, and D. Peleg, Fault tolerant additive spanners, \emph{Proc. of the 38th Workshop on Graph-Theoretic Concepts in Computer Science (WG'12)}, Vol. 7551 of Lecture Notes in Computer Science, Springer, 206--214, 2012.

\bibitem{Che13} S. Chechik, New additive spanners, \emph{Proc. of the 24th Symp. on Discrete Algorithms (SODA'13)}, ACM Press, 498--512, 2013.

\bibitem{CLPR09} S. Chechik, M. Langberg, D. Peleg, and L. Roditty, Fault-tolerant spanners for general graphs, \emph{Proc. of the 41st Symp. on the Theory of Computing (STOC'09)}, ACM Press, 435--444, 2009.

\bibitem{CLPR10} S. Chechik, M. Langberg, D. Peleg, and L. Roditty, $f$-sensitivity distance oracles and routing schemes, \emph{Proc. of the 18th European Symp. on Algorithms (ESA'10)}, Vol. 6942 of Lecture Notes in Computer Science, Springer, 84--96, 2010.

\bibitem{CGK13} M. Cygan, F. Grandoni, and T. Kavitha, On pairwise spanners, \emph{Proc. of the Symp. on Theoretical Aspects of Computer Science (STACS'13)}, Vol. 20 of Leibniz International Proceedings in Informatics (LIPIcs), 209--220, 2013.

\bibitem{DDFLP15} A. D'Andrea, M. D'Emidio, D. Frigioni, S. Leucci, and G. Proietti, Path-fault-tolerant approximate shortest-path trees,  \emph{Proc. of the 22nd Int. Coll. on Structural Information and Communication Complexity (SIROCCO’15)}, Vol. 9439 of Lecture Notes in Computer Science, Springer, 224--238, 2015.

\bibitem{DK11} M. Dinitz and R. Krauthgamer, Fault-tolerant spanners: better and simpler, \emph{Proc. of the 30th Symp. on
Principles of Distributed Computing (PODC'11)}, ACM Press, 169--178, 2011.

\bibitem{DP09} R. Duan and S. Pettie, Dual-failure distance and connectivity oracles, \emph{Proc. of the 20th Symp. on Discrete Algorithms (SODA'09)}, ACM Press, 506--515, 2009.

\bibitem{E64} P. Erd{\H{o}}s, Extremal problems in graph theory, \emph{Proc. of the Symp. on Theory of Graphs and its Applications}, 29--36, 1964.

\bibitem{GW12} F. Grandoni and V. Vassilevska Williams, Improved distance sensitivity oracles via fast single-source replacement paths, \emph{Proc. of the 53rd Annual IEEE Symp. on Foundations of Computer Science (FOCS'12)}, 748--757, 2012.


\bibitem{NPW03} E. Nardelli, G. Proietti, and P. Widmayer, Swapping a failing edge of a single source
shortest paths tree is good and fast, {\em Algorithmica}, 36(4):361--374, 2003.

\bibitem{P14} M. Parter, Vertex fault tolerant additive spanners,  \emph{Proc. of the 28th International Symposium on Distributed Computing (DISC'14)}, Vol. 8784 of Lecture Notes in Computer Science, Springer, 167--181, 2014.

\bibitem{P15} M. Parter, Dual failure resilient BFS structure, \emph{Proc. of the 34th Symp. on
Principles of Distributed Computing (PODC'15)}, ACM Press, 481--490, 2015.

\bibitem{P16} M. Parter, Fault-Tolerant Logical Network Structures, \emph{Bulletin of the EATCS}, Vol. 118, 2016.

\bibitem{PP13} M. Parter and D. Peleg, Sparse fault-tolerant BFS trees, \emph{Proc. of the 21st European Symp. on Algorithms (ESA'13)}, Vol. 8125 of Lecture Notes in Computer Science, Springer, 779--790, 2013.

\bibitem{PP14} M. Parter and D. Peleg, Fault tolerant approximate BFS structures, \emph{Proc. of the 25th Symp. on Discrete Algorithms (SODA'14)}, ACM Press, 1073--1092, 2014.

\end{thebibliography}
\end{document}